\newcommand{\bq}{\begin{eqnarray}}
\newcommand{\eq}{\end{eqnarray}}
\newcommand{\eps}{\varepsilon}
\newcommand{\Eulerconstant}{\gamma_{\mathrm{E}}}
\newcommand{\NV}{n}
\newcommand{\extraassumption}{5}
\theoremstyle{plain}
\newtheorem{theoremcounter}{}[]
\newtheorem{theorem}[theoremcounter]{Theorem}
\newtheorem{proposition}[theoremcounter]{Proposition}
\newtheorem{lemma}[theoremcounter]{Lemma}
\numberwithin{equation}{section}
\begin{document}

\thispagestyle{empty}

\begin{flushright}
  MITP/20-007
\end{flushright}

\vspace{1.5cm}

\begin{center}
  {\Large\bf On the computation of intersection numbers for twisted cocycles \\
  }
  \vspace{1cm}
  {\large Stefan Weinzierl\footnote{weinzierl@uni-mainz.de} \\
  \vspace{1cm}
      {\small \em PRISMA Cluster of Excellence, Institut f{\"u}r Physik, }\\
      {\small \em Johannes Gutenberg-Universit{\"a}t Mainz,}\\
      {\small \em D - 55099 Mainz, Germany}\\
  } 
\end{center}

\vspace{2cm}

\begin{abstract}\noindent
  {
Intersection numbers of twisted cocycles arise in mathematics in the field of algebraic geometry.
Quite recently, they appeared in physics:
Intersection numbers of twisted cocycles define a scalar product on the vector space of Feynman integrals.
With this application, the practical and efficient computation of intersection numbers of twisted cocycles 
becomes a topic of interest.
An existing algorithm for the computation of intersection numbers of twisted cocycles requires in intermediate
steps the introduction of algebraic extensions (for example square roots), 
although the final result may be expressed without algebraic extensions.
In this article I present an improvement of this algorithm,
which avoids algebraic extensions.
   }
\end{abstract}

\vspace*{\fill}

\newpage

\section{Introduction}
\label{sect:intro}

Intersection numbers of twisted cocycles arise in mathematics in the field of algebraic geometry
and have been investigated there \cite{aomoto1975,Matsumoto:1994,cho1995,matsumoto1998,Ohara:2003,Goto:2013,Goto:2015aaa,Goto:2015aab,Goto:2015aac,Matsubara-Heo:2019,Aomoto:book,Yoshida:book}.
Quite recently, it has been become clear that they are also relevant to physics
and they provide an underlying mathematical framework for some established formulae and methods.
First of all the Cachazo-He-Yuan formula \cite{Cachazo:2013gna,Cachazo:2013hca,Cachazo:2013iea}
for tree-level scattering amplitude may be interpreted 
as an intersection number \cite{Mizera:2017rqa,Mizera:2017cqs,Mizera:2019gea,Mizera:2019blq} 
in the case where both half-integrands have only simple poles.
Secondly, there is an interesting application in the context of Feynman integrals:
Intersection numbers can be used to define an inner product on the space of 
master integrals \cite{Mastrolia:2018uzb,Frellesvig:2019kgj,Frellesvig:2019uqt,Mizera:2019vvs,Chen:2020uyk,Frellesvig:2020qot,Caron-Huot:2021xqj}.
This gives an alternative to Feynman integral reduction, 
traditionally done with the help of integration-by-parts identities \cite{Tkachov:1981wb,Chetyrkin:1981qh}.
This raises the question if the use of intersection numbers can help to speed-up the task of Feynman integral
reduction.
In a first step this requires an algorithm for the efficient calculation of intersection numbers of twisted cocycles.
This is the topic of this paper.

An existing algorithm \cite{mimachi2004,Mizera:2019gea,Frellesvig:2019uqt}
for the computation of multivariate intersection numbers of twisted cocycles
uses a recursive approach. At each step, a sum over the residues of all singular points of a matrix is performed.
The singular points are given by polynomial equations and this step introduces in general algebraic extensions
(e.g. roots).

On the other hand it is well-known that integration-by-parts reduction can be done entirely with polynomials
and does not introduce algebraic extensions.

It is therefore of interest to investigate if multivariate intersection numbers can be computed without
introducing algebraic extensions in intermediate stages.
Analysing the Cachazo-He-Yuan formula shows a possible path: The original Cachazo-He-Yuan formula
involves a sum over residues and evaluating the residues individually inevitably leads to 
algebraic extensions \cite{Weinzierl:2014vwa}.
However, the sum of all residues is a global residue and can be evaluated 
without algebraic extensions \cite{Sogaard:2015dba,Bosma:2016ttj,Cattani:2005,Zhang:2016kfo}.

The Cachazo-He-Yuan formula specialised to the bi-adjoint scalar theory with half-integrands given by Parke-Taylor factors
has only simple poles.
It is a rather simple intersection number, where all polynomials are hyperplanes.
In the application towards Feynman integrals this will no longer be true and we will encounter more general hypersurfaces.
Let us mention that in the case where all polynomials are hyperplanes and the cocycles have only simple poles,
ref.~\cite{Mizera:2017rqa} relates the multivariate intersection number to a sum of residues over the critical points
of the connection. This sum is a global residue and can be evaluated without algebraic extensions.

It is worth pointing out the difference between the Cachazo-He-Yuan formula and the inner product for Feynman integrals
with respect to intersection numbers and global residues:
The Cachazo-He-Yuan formula is always a global residue. If both half-integrand have simple poles, it is also an intersection number.
The inner product for Feynman integrals is always an intersection number. If at all stages we only have simple poles, it can be computed 
from global residues.

Thus the task is to find an algorithm for the computation of intersection numbers, which
avoids algebraic extensions and is not restricted to hyperplanes and simple poles.
In this paper I present such an algorithm.
The algorithm consists of three steps:
\begin{enumerate}
\item Recursive approach: The algorithm integrates out one variable at a time.
This part is identical to the algorithm of \cite{Mizera:2019gea,Frellesvig:2019uqt}.
It has the advantage to reduce a multivariate problem to a univariate problem.
\item Reduction to simple poles: In general we deal in cohomology with equivalence classes.
We may replace a representative of an equivalence class with higher poles with an equivalent representative
with only simple poles.
This is similar to integration-by-part reduction. However, let us stress that the involved systems of linear
equations are usually significantly smaller compared to standard integration-by-part reduction.
\item Evaluation of the intersection number as a global residue. Having reduced our objects to simple poles,
we may evaluate the intersection in one variable as an univariate global residue.
This is easily computed and does not involve algebraic extensions.
\end{enumerate}
This paper is organised as follows:
In section~\ref{sect:notation} we introduce our notation and describe the basic set-up.
In section~\ref{sect:recursive} we review the recursive approach for the computation 
of a multivariate intersection number.
In section~\ref{sect:equivalence} we discuss the equivalence classes of the coefficients,
when an $n$-dimensional cocycle is expanded in a basis of the $(n-1)$-dimensional cohomology group.
The coefficients may have higher poles in the $n$-th variable.
In section~\ref{sect:reduction} we show how the pole order can be reduced systematically.
Section~\ref{sect:forumula} contains the main result of this paper:
It gives a formula for the intersection number of the coefficients in the case of simple poles.
Section~\ref{sect:global_residue} is dedicated to the efficient computation of an univariate global
residue.
Although it is not the main topic of this paper, we discuss in section~\ref{sect:bases}
briefly how bases of twisted cohomology groups / bases of master integrals are obtained.
In section~\ref{sect:algorithm} we summarise the algorithm for the computation of intersection numbers.
A few examples are given in section~\ref{sect:examples}.
Section~\ref{sect:applications} discusses the application towards Feynman integrals.
Finally, our conclusions are given in section~\ref{sect:conclusions}.
In appendix~\ref{sect:laurent} we summarise the algorithm of \cite{Mizera:2019gea,Frellesvig:2019uqt}.


\section{Notation and definitions}
\label{sect:notation}

Let ${\mathbb K}$ be a field.
In typical applications we have ${\mathbb K}={\mathbb Q}$
or ${\mathbb K}={\mathbb Q}(y_1,\dots,y_s)$.
Consider $m$ polynomials $p_i$ in $n$ variables $z=(z_1,\dots,z_n)$:
\bq
\label{def_polynomials}
 p_i & \in & {\mathbb K}\left[z_1,\dots,z_n\right],
 \;\;\;\;\;\;\;\;\;
 1 \le i \le m.
\eq
For $m$ complex numbers $\gamma=(\gamma_1,\dots,\gamma_m)$ 
we set
\bq
\label{def_u}
 u
 & = &
 \prod\limits_{i=1}^m p_i^{\gamma_i},
\eq
and
\bq
 \omega
 & = &
 d \ln u
 \; = \;
 \sum\limits_{j=1}^n  \omega_j dz_j,
 \nonumber \\
 \omega_j & = & \frac{\partial \ln u}{\partial z_j} \; = \; \frac{P_j}{Q_j},
 \;\;\;\;\;\;
 P_j, Q_j \; \in \; {\mathbb K}\left[z_1,\dots,z_n\right],
 \;\;\;\;\;\;
 \gcd\left(P_j,Q_j\right) \; = \; 1.
\eq
The differential one-form $\omega$ defines a connection and a covariant derivative
\bq
 \nabla_\omega & =& d + \omega.
\eq
$\omega$ is also called the ``twist''.
Set 
\bq
 D_i \; = \; \{ p_i = 0 \} \; \subset \; {\mathbb C}^n
 & \mbox{and} & 
 D \; = \; 
 \bigcup\limits_{i=1}^m D_i.
\eq
Points $z^{\mathrm{crit}}=(z^{\mathrm{crit}}_1,\dots,z^{\mathrm{crit}}_n)$ which satisfy
\bq
 P_1 \; = \; \dots \; = \; P_n \; = \; 0
\eq
are called critical points.
A critical point $z^{\mathrm{crit}}$ is called proper, if
\bq
 z^{\mathrm{crit}} & \notin & D.
\eq
A critical point $z^{\mathrm{crit}}$ is non-degenerate if the Hessian matrix
\bq
 H_{ij}\left(z\right)
 & = &
 \frac{\partial^2 u}{\partial z_i \partial z_j}
\eq
is invertible at $z=z^{\mathrm{crit}}$.
We consider rational differential $n$-forms $\varphi$ in the variables $z=(z_1,\dots,z_n)$, 
which are holomorphic on ${\mathbb C}^n - D$.
The rational $n$-forms $\varphi$ are of the form
\bq
\label{representative_left}
 \varphi
 & = &
 \frac{q}{p_1^{n_1} \dots p_m^{n_m}} \; dz_n \wedge \dots \wedge dz_1,
 \;\;\;\;\;\;\;\;\;
 q \in {\mathbb K}\left[z_1,\dots,z_n\right],
 \;\;\;
 n_i \in {\mathbb N}_0.
\eq
Using the reversed wedge product $dz_n \wedge \dots \wedge dz_1$ is at this stage just a convention.
Two $n$-forms $\varphi'$ and $\varphi$ are called equivalent,
if they differ by a covariant derivative
\bq
 \varphi'
 \sim
 \varphi
 & \Leftrightarrow &
 \varphi'
 \; = \;
  \varphi
  + \nabla_\omega \xi
\eq
for some $(n-1)$-form $\xi$.
We denote the equivalence classes by $\langle \varphi |$.
Being $n$-forms, each $\varphi$ is closed with respect to $\nabla_\omega$ and the equivalence classes
define the twisted cohomology group $H^n_\omega$:
\bq
 \left\langle \varphi \right|
 & \in &
 H^n_\omega.
\eq
Under certain assumptions it can be shown \cite{aomoto1975} that 
the twisted cohomology groups $H^k_\omega$ vanish for $k \neq n$, thus $H^n_\omega$ is the only interesting
twisted cohomology group.
The physical interpretation for Feynman integrals is discussed in \cite{Caron-Huot:2021xqj}.

The dual twisted cohomology group is given by
\bq
 \left( H^n_\omega \right)^\ast 
 & = &
 H^n_{-\omega}.
\eq
Elements of $( H^n_\omega )^\ast$ are denoted by $| \varphi \rangle$.
We have
\bq
 \left| \varphi' \right\rangle
 =
 \left| \varphi \right\rangle
 & \Leftrightarrow &
 \varphi'
 \; = \;
  \varphi
  + \nabla_{-\omega} \xi
\eq
for some $(n-1)$-form $\xi$.
A representative of a dual cohomology class is of the form
\bq
\label{representative_right}
 \varphi
 & = &
 \frac{q}{p_1^{n_1} \dots p_m^{n_m}} \; dz_1 \wedge \dots \wedge dz_n,
 \;\;\;\;\;\;\;\;\;
 q \in {\mathbb K}\left[z_1,\dots,z_n\right],
 \;\;\;
 n_i \in {\mathbb N}_0.
\eq
It will be convenient to use here the order $dz_1 \wedge \dots \wedge dz_n$ in the wedge product.

For a $n$-form $\varphi_L$ and a $n$-form $\varphi_R$ we define the rational functions 
$\hat{\varphi}_L$ and $\hat{\varphi}_R$ by stripping off $dz_n \wedge \dots \wedge dz_1$ or $dz_1 \wedge \dots \wedge dz_n$, respectively.
\bq
 \varphi_L 
 \; = \; 
 \hat{\varphi}_L dz_n \wedge \dots \wedge dz_1,
 & &
 \varphi_R
 \; = \; 
 \hat{\varphi}_R dz_1 \wedge \dots \wedge dz_n.
\eq
The central object of this article are the intersection numbers
\bq
 \left\langle \varphi_L \right. \left| \varphi_R \right\rangle,
 \;\;\;\;\;\;\;\;\;
 \left\langle \varphi_L \right| \in H^n_\omega,
 \;\;\;\;\;\;
 \left| \varphi_R \right\rangle \in \left( H^n_\omega \right)^\ast.
\eq
They are defined by \cite{cho1995,Aomoto:book}
\bq
 \left\langle \varphi_L \right. \left| \varphi_R \right\rangle
 & = &
 \frac{1}{\left(2\pi i\right)^n}
 \int \iota_\omega\left(\varphi_L\right) \wedge \varphi_R
 \; = \;
 \frac{1}{\left(2\pi i\right)^n}
 \int \varphi_L \wedge \iota_{-\omega}\left(\varphi_R\right),
\eq
where $\iota_\omega$ maps $\varphi_L$ to its compactly supported version,
and similar for $\iota_{-\omega}$.
From the definition we have
\bq
 \left\langle \varphi_L \right. \left| \varphi_R \right\rangle_{\omega}
 & = &
 \left(-1\right)^n
 \left\langle \varphi_R \right. \left| \varphi_L \right\rangle_{-\omega}.
\eq
We are interested in evaluating this integral.
In ref.~\cite{Mizera:2019gea,Frellesvig:2019uqt} a recursive algorithm for the evaluation of multivariate intersection
numbers has been given. This algorithm is briefly reviewed in appendix~\ref{sect:laurent}.
This algorithm requires in intermediate steps algebraic extensions (the roots of the polynomials $p_i$ in the variable $z_j$), although in the final expressions the roots drop out.
It is therefore desirable to have an algorithm which computes the intersection numbers without the need
of introducing algebraic extensions.
In this article I present such an algorithm.

As in \cite{Mizera:2019gea,Frellesvig:2019uqt}, we have to make some assumptions.
Standard assumptions related to the connection one-form $\omega$ are:
\begin{enumerate}
\item We require that the exponents $\gamma_1,\dots,\gamma_m$ are generic, in particular not integers.
\item We require that there are only a finite number of proper critical points, all of which are non-degenerate.
\end{enumerate}
The algorithm of \cite{Mizera:2019gea,Frellesvig:2019uqt} assumes that there is a suitable non-singular sequence of fibrations, from
which the intersection number can be computed recursively.
This is also an assumption of our algorithm.
In technical terms, this implies (the definitions of the quantities will be given in the next section)
\begin{enumerate}
\setcounter{enumi}{2}
\item At each step in the recursion and in every punctured neighbourhood of a singular point $z_i=z^{\mathrm{sing}}_i$ 
there are unique holomorphic vector-valued solutions  
$\hat{\psi}^{({\bf i})}_{L,j}$ and $\hat{\psi}^{({\bf i})}_{R,j}$ of
\bq
\label{holomorphic_solution}
 \partial_{z_i} \hat{\psi}^{({\bf i})}_{L,j} + \hat{\psi}^{({\bf i})}_{L, k} \Omega^{({\bf i})}_{k j} \; = \; \hat{\varphi}^{({\bf i})}_{L,j},
 & &
 \partial_{z_i} \hat{\psi}^{({\bf i})}_{R,j} - \Omega^{({\bf i})}_{j k} \hat{\psi}^{({\bf i})}_{R, k} \; = \; \hat{\varphi}^{({\bf i})}_{R,j}.
\eq
\end{enumerate}
In addition we will assume that 
\begin{enumerate}
\setcounter{enumi}{3}
\item there are bases of $H^{({\bf 0})}_\omega, \dots, H^{({\bf n-1})}_\omega$ such that
the connection matrices $\Omega^{({\bf 1})}, \dots, \Omega^{({\bf n})}$ have only simple poles,
\item the determinant
\bq
 \det\left(\Omega^{({\bf i})}\right)
\eq
has $\nu_{\bf i}=\dim H^{({\bf i})}_\omega$ critical points in the variable $z_i$.
\end{enumerate}
Assumption (4) is required for the reduction to simple poles.
We will comment on assumption (\extraassumption) in section~\ref{sect:example_elliptic_curve_1} and section~\ref{sect:equal_mass_sunrise}.


\section{The recursive structure}
\label{sect:recursive}

We will compute the intersection numbers in $n$ variables $z_1, \dots z_n$ recursively by splitting
the problem into the computation of an intersection number in $(n-1)$ variables $z_1, \dots, z_{n-1}$
and the computation of a (generalised) intersection number in the variable $z_n$.
By recursion, we therefore have to compute only (generalised) intersection numbers in a single variable $z_i$.
This reduces the multivariate problem to an univariate problem.
This step is essentially identical to \cite{mimachi2004,Mizera:2019gea,Frellesvig:2019uqt}.

Let us comment on the word ``generalised'' intersection number:
We only need to discuss the univariate case.
Consider two cohomology classes $\langle \varphi_L |$ and $| \varphi_R \rangle$.
Representatives $\varphi_L$ and $\varphi_R$ for the two cohomology classes
$\langle \varphi_L |$ and $| \varphi_R \rangle$ are differential one-forms and of the form as
in eq.~(\ref{representative_left}) or eq.~(\ref{representative_right}).
We may view the representatives $\varphi_L$ and $\varphi_R$,
the cohomology classes $\langle \varphi_L |$ and $| \varphi_R \rangle$, and the twist $\omega$ 
as scalar quantities.

Consider now a vector of $\nu$ differential one-forms $\varphi_{L,j}$ in the variable $z$, 
where $j$ runs from $1$ to $\nu$.
Similar, consider for the dual space a $\nu$-dimensional vector $\varphi_{R,j}$
and generalise $\omega$ to a $(\nu\times \nu)$-dimensional matrix $\Omega$.
The equivalence classes $\langle \varphi_{L,j} |$ and $| \varphi_{R,j} \rangle$ are now defined
by
\bq
 \hat{\varphi}_{L,j}' \; = \; \hat{\varphi}_{L,j} + \partial_z \xi_j + \xi_i \Omega_{i j}
 & \mbox{and} &
 \hat{\varphi}_{R,j}' \; = \; \hat{\varphi}_{R,j} + \partial_z \xi_j - \Omega_{j i} \xi_i,
\eq
for some zero-forms $\xi_j$ (i.e. functions).
We will define intersection numbers for the vector-valued cohomology classes
$\langle \varphi_{L,j} |$ and $| \varphi_{R,j} \rangle$.

Readers familiar with gauge theories will certainly recognise that the generalisation is exactly the same
step as going from an Abelian gauge theory (like QED) to a non-Abelian gauge theory (like QCD).

Let us now set up the notation for the recursive structure.
We fix an ordered sequence $(z_{\sigma_1},\dots,z_{\sigma_n})$, indicating that we first integrate out $z_{\sigma_1}$,
then $z_{\sigma_2}$, etc..
Without loss of generality we will always consider the order $(z_1,\dots,z_n)$, unless indicated otherwise.

For $i=0,\dots,n$ we consider a fibration $E_i : {\mathbb C}^n \rightarrow B_i$
with total space ${\mathbb C}^n$,
fibre $V_i = {\mathbb C}^i$ parametrised by the coordinates $(z_1,\dots,z_i)$
and base $B_i = {\mathbb C}^{n-i}$ parametrised by the coordinates $(z_{i+1},\dots,z_n)$.
The covariant derivative splits as
\bq
 \nabla_\omega 
 & = &
 \nabla_\omega^{({\bf i}),F} + \nabla_\omega^{({\bf i}),B},
\eq
with
\bq
 \nabla_\omega^{({\bf i}),F}
 \; = \;
 \sum\limits_{j=1}^i dz_j \left( \frac{\partial}{\partial z_j} + \omega_j \right),
 & &
 \nabla_\omega^{({\bf i}),B}
 \; = \;
 \sum\limits_{j=i+1}^n dz_j \left( \frac{\partial}{\partial z_j} + \omega_j \right).
\eq
One sets
\bq
 \omega^{({\bf i})}
 & = &
 \sum\limits_{j=1}^i \omega_j dz_j.
\eq
Clearly, for $i=n$ we have
\bq
 \omega^{({\bf n})}
 \; = \; 
 \omega, 
 & &
 \nabla_\omega^{({\bf n}),F}
 \; = \;
 \nabla_\omega.
\eq
Following \cite{Frellesvig:2019uqt}, we study for each $i$ the twisted cohomology group in the fibre,
defined by replacing $\omega$ with $\omega^{({\bf i})}$.
The additional variables $(z_{i+1},\dots,z_n)$ are treated as parameters in the same way as
the variables $(y_1,\dots,y_s)$ of the ground field ${\mathbb K}={\mathbb Q}(y_1,\dots,y_s)$.
For each $i$ only the $i$-th cohomology group is of interest and 
for simplicity we write
\bq
 H^{({\bf i})}_\omega
 \; = \;
 H^{i}_{\omega^{({\bf i})}},
 & &
 \left( H^{({\bf i})}_\omega \right)^\ast
 \; = \;
 \left( H^{i}_{\omega^{({\bf i})}} \right)^\ast.
\eq
We denote the dimensions of the twisted cohomology groups by
\bq 
 \nu_{\bf i} & = & \dim H^{({\bf i})}_\omega \; = \; \dim \left( H^{({\bf i})}_\omega \right)^\ast.
\eq
Let 
$\langle e^{({\bf i})}_j |$ with $1\le j \le \nu_{\bf i}$
be a basis of $H^{({\bf i})}_\omega$ and let
$| h^{({\bf i})}_j \rangle$ with $ 1 \le j \le \nu_{\bf i}$
be a basis of $( H^{({\bf i})}_\omega )^\ast$.
We denote the $(\nu_{\bf i} \times \nu_{\bf i})$-dimensional intersection matrix by $C_{\bf i}$.
The entries are given by
\bq
 \left( C_{\bf i} \right)_{j k}
 & = &
 \left\langle e^{({\bf i})}_{j} \right| \left. h^{({\bf i})}_{k} \right\rangle.
\eq
The matrix $C_{\bf i}$ is invertible.
Given a basis $\langle e^{({\bf i})}_j |$ of $H^{({\bf i})}_\omega$
we say that a basis $| d^{({\bf i})}_j \rangle$ of $( H^{({\bf i})}_\omega )^\ast$ is
the dual basis with respect to $\langle e^{({\bf i})}_j |$
if
\bq
 \left\langle e^{({\bf i})}_{j} \right| \left. d^{({\bf i})}_{k} \right\rangle
 & = &
 \delta_{j k}.
\eq
We may always construct a dual basis:
\bq
\label{construction_dual_basis}
 \left| d^{({\bf i})}_j \right\rangle
 & = &
 \left| h^{({\bf i})}_k \right\rangle
 \left( C_{\bf i}^{-1} \right)_{kj}.
\eq
Using the dual basis will simplify some of the formulae in the sequel.

The essential step in the recursive approach is to expand 
the twisted cohomology class $\langle \varphi^{({\bf n})}_L | \in H^{({\bf n})}_\omega$ 
in the basis of $H^{({\bf n-1})}_\omega$:
\bq
 \left\langle \varphi^{({\bf n})}_L \right|
 & = &
 \sum\limits_{j=1}^{\nu_{\bf n-1}}
 \left\langle \varphi^{({\bf n})}_{L,j} \right| \wedge \left\langle e^{({\bf n-1})}_j \right|.
\eq
Here, $\langle e^{({\bf n-1})}_j |$ denotes a basis of $H^{({\bf n-1})}_\omega$.
Representatives of these cohomology classes are differential $(n-1)$-forms of the form
\bq
\label{def_representative_e_n_minus_1}
 \hat{e}^{({\bf n-1})}_j dz_{n-1} \wedge \dots dz_1,
\eq
where $\hat{e}^{({\bf n-1})}_j$ may depend on all variables $(z_1, \dots, z_n)$.
On the other hand, the coefficients $\langle \varphi^{({\bf n})}_{L,j}|$ are one-forms proportional to $dz_n$.
They only depend on $z_n$, but not on $(z_1,\dots,z_{n-1})$.
The coefficients $\langle \varphi^{({\bf n})}_{L,j}|$
are given by
\bq
\label{def_coeff_left}
 \left\langle \varphi^{({\bf n})}_{L,j} \right|
 & = &
 \left\langle \varphi^{({\bf n})}_L \left| d^{({\bf n-1})}_{j} \right. \right\rangle.
\eq
Note that the coefficients $\langle \varphi^{({\bf n})}_{L,j}|$ are obtained by computing only
intersection numbers in $(n-1)$ variables.
This is compatible with the recursive approach.
It also shows that the coefficients do not depend on the variables $(z_1,\dots,z_{n-1})$, as these
variables are integrated out.
Given a representative $\varphi^{({\bf n})}_L$ of the class $\langle \varphi^{({\bf n})}_L |$
and representatives $d^{({\bf n-1})}_j$ of the basis elements $| d^{({\bf n-1})}_j \rangle$ of $(H^{({\bf n-1})}_\omega)^\ast$ 
we may (unambiguously) compute a representative $\hat{\varphi}^{({\bf n})}_{L,j} dz_n$ for the coefficients $\langle \varphi^{({\bf n})}_{L,j} |$ through
eq.~(\ref{def_coeff_left}).
The result will not depend on which representatives $d^{({\bf n-1})}_j$ we choose for the basis
$| d^{({\bf n-1})}_j \rangle$ of $(H^{({\bf n-1})}_\omega)^\ast$, 
the $(n-1)$-fold intersection number in eq.~(\ref{def_coeff_left})
is invariant under redefining individual $d^{({\bf n-1})}_j$ by
$\nabla_{-\omega^{({\bf n-1})}} \psi$ for some $(n-2)$-form $\psi$ 
such that $\nabla_{-\omega^{({\bf n-1})}} \psi$ is proportional to $dz_1 \wedge \dots \wedge dz_{n-1}$.
Eq.~(\ref{def_coeff_left}) is also invariant under redefining $\varphi^{({\bf n})}_L$
by
\bq
 f\left(z_n\right) dz_n \wedge \nabla_{\omega^{({\bf n-1})}} \psi
\eq
for an arbitrary function $f(z_n)$ and a $(n-2)$-form $\psi$ as above.
However, $\varphi^{({\bf n})}_L$ represents a larger equivalence class, invariant under
\bq
 \varphi^{({\bf n})}_L
 & \rightarrow &
 \varphi^{({\bf n})}_L
 + \nabla_\omega \xi
\eq
for some $(n-1)$-form $\xi$.
This has the effect that the representatives of the coefficients are not unique
and we should also think of the coefficients as equivalence classes (hence the notation $\langle \varphi^{({\bf n})}_{L,j}|$).
In the next section we will discuss in detail the freedom in redefining the coefficients.
In general we cannot redefine a single coefficient $\langle \varphi^{({\bf n})}_{L,j_{\mathrm{fix}}}|$ for a fixed $j_{\mathrm{fix}}$, 
but we have to consider the $\nu_{\bf n-1}$-dimensional
vector of all coefficients $\langle \varphi^{({\bf n})}_{L,j}|$ (with $j=1,.\dots,\nu_{\bf n-1}$).

We close this paragraph by giving the corresponding formulae for the dual twisted cohomology classes.
One expands $| \varphi^{({\bf n})}_R \rangle \in ( H^{({\bf n})}_\omega )^\ast$ in the dual basis of $( H^{({\bf n-1})}_\omega )^\ast$:
\bq
 \left| \varphi^{({\bf n})}_R \right\rangle
 & = &
 \sum\limits_{j=1}^{\nu_{\bf n-1}}
 \left| d^{({\bf n-1})}_j \right\rangle
 \wedge
 \left| \varphi^{({\bf n})}_{R,j} \right\rangle.
\eq
The coefficients $| \varphi^{({\bf n})}_{R,j}\rangle$ are one-forms proportional to $dz_n$ and 
independent of $(z_1,\dots,z_{n-1})$
They are given by
\bq
 \left| \varphi^{({\bf n})}_{R,j} \right\rangle
 & = &
 \left\langle \left. e^{({\bf n-1})}_{j} \right| \varphi^{({\bf n})}_R \right\rangle.
\eq
Please note that we have chosen the dual basis which satisfies
\bq
 \left\langle \left. e^{({\bf n-1})}_{j} \right| d^{({\bf n-1})}_k \right\rangle
 & = &
 \delta_{j k}.
\eq


\section{The equivalence class of the coefficients}
\label{sect:equivalence}

In this section we study in detail the equivalence classes of the coefficients
$\langle \varphi^{({\bf n})}_{L,j}|$ and $| \varphi^{({\bf n})}_{R,j}\rangle$.
We will see that they transform as vectors.

\begin{proposition}
\label{proposition_1}
Consider the cohomology class $\langle \varphi^{({\bf n})}_L | \in H^{({\bf n})}_\omega$
and expand $\langle \varphi^{({\bf n})}_L |$ 
in the basis of $H^{({\bf n-1})}_\omega$:
\bq
 \left\langle \varphi^{({\bf n})}_L \right|
 & = &
 \sum\limits_{j=1}^{\nu_{\bf n-1}}
 \left\langle \varphi^{({\bf n})}_{L,j} \right| \wedge \left\langle e^{({\bf n-1})}_j \right|.
\eq
Define $\hat{\varphi}^{({\bf n})}_{L,j}$ by $\varphi^{({\bf n})}_{L,j} = \hat{\varphi}^{({\bf n})}_{L,j} dz_n$.
Changing the representative amounts to
\bq
\label{gauge_trafo_left}
 \left\langle \varphi^{({\bf n})}_L \right|
 & \rightarrow &
 \left\langle \varphi^{({\bf n})}_L \right|
 + 
 \left\langle \nabla_\omega \xi \right|,
\eq
for some $(n-1)$-form $\xi$.
Let us now consider transformations which are generated by $(n-1)$-forms $\xi$ 
of the type
\bq
 \xi 
 & = &
 \sum\limits_{j=1}^{\nu_{\bf n-1}}
 f_j\left(z_n\right) \left\langle e^{({\bf n-1})}_j \right|,
\eq
where the functions $f_j(z_n)$ depend only on $z_n$, but not on $z_1,\dots,z_{n-1}$.
Then the representatives of the coefficients transform as
\bq
\label{left_invariance}
 \hat{\varphi}^{({\bf n})}_{L,j}
 & \rightarrow &
 \hat{\varphi}^{({\bf n})}_{L,j}
 +
 f_i \left( \overleftarrow{\partial}_{z_n} \delta_{i j} + \Omega^{({\bf n})}_{ij} \right),
\eq
where the $(\nu_{\bf n-1} \times \nu_{\bf n-1})$-matrix $\Omega^{({\bf n})}$ is defined by
\bq
\label{def_Omega_left}
 \Omega^{({\bf n})}_{ij}
 & = & 
 \left\langle \left(\partial_{z_n}+\omega_{n}\right) e^{({\bf n-1})}_i \right| \left. d^{({\bf n-1})}_{j} \right\rangle.
\eq
\end{proposition}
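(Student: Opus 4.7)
The plan is to compute $\nabla_\omega\xi$ explicitly for a representative of $\xi$ of the specified form, and then read off the resulting shift of the coefficients $\hat\varphi^{({\bf n})}_{L,j}$ by applying the $(n-1)$-fold intersection pairing with the dual basis $|d^{({\bf n-1})}_j\rangle$, exactly as prescribed by eq.~(\ref{def_coeff_left}). First I would invoke the fibration decomposition $\nabla_\omega = \nabla_\omega^{({\bf n-1}),F} + dz_n\,(\partial_{z_n}+\omega_n)$ from the previous section, so that the action on $\xi$ splits into a ``fibre'' contribution and a $dz_n$ contribution.

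The key observation is that the fibre piece $\nabla_\omega^{({\bf n-1}),F}\xi$ vanishes identically as a differential form. Indeed, each summand of a representative of $\xi$ has the form $f_j(z_n)\,\hat{e}^{({\bf n-1})}_j\,dz_{n-1}\wedge\dots\wedge dz_1$, which already saturates the exterior algebra generated by $dz_1,\dots,dz_{n-1}$; wedging with any further $dz_i$ with $i\le n-1$ produces zero. Consequently, only the $dz_n$ contribution survives, and it evaluates to
\bq
 \nabla_\omega\xi
 & = &
 dz_n\wedge\sum_{i=1}^{\nu_{\bf n-1}}
 \Big[(\partial_{z_n}f_i)\,e^{({\bf n-1})}_i + f_i\,(\partial_{z_n}+\omega_n)\,e^{({\bf n-1})}_i\Big].
\eq

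The next step is to pair this $n$-form against $|d^{({\bf n-1})}_j\rangle$ in the $(n-1)$-variable intersection pairing. Since the scalar functions $f_i(z_n)$ and the form $dz_n$ do not depend on $z_1,\dots,z_{n-1}$, they factor through the pairing. Using the duality relation $\langle e^{({\bf n-1})}_i| d^{({\bf n-1})}_j\rangle = \delta_{ij}$ on the first term and the definition (\ref{def_Omega_left}) on the second, the shift of $\hat\varphi^{({\bf n})}_{L,j}$ comes out to be $\partial_{z_n}f_j + f_i\,\Omega^{({\bf n})}_{ij}$, which is precisely (\ref{left_invariance}) written with the arrow-derivative notation.

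The one subtlety requiring care is the well-definedness of eq.~(\ref{def_Omega_left}) on cohomology classes: the quantity $\langle(\partial_{z_n}+\omega_n) e^{({\bf n-1})}_i| d^{({\bf n-1})}_j\rangle$ must be independent of the representative chosen for $\langle e^{({\bf n-1})}_i|$. This follows from the fact that $\partial_{z_n}+\omega_n$ commutes with $\nabla_{\omega^{({\bf n-1})}}$ (there are no mixed $dz_n\wedge dz_j$ pieces in $\omega$, because $\omega$ is closed and already in the split form), so it descends to an endomorphism of $H^{({\bf n-1})}_\omega$ and the pairing is insensitive to $\nabla_{\omega^{({\bf n-1})}}$-exact shifts. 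Everything else is elementary linear algebra, so I expect no genuine obstacles beyond this bookkeeping.
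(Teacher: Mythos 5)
Your proof is correct and takes essentially the same route as the paper's: both rely on the fact that, since representatives of $\langle e^{({\bf n-1})}_j|$ are proportional to $dz_{n-1}\wedge\dots\wedge dz_1$, only the $dz_n\,(\partial_{z_n}+\omega_n)$ part of $\nabla_\omega$ survives when acting on $\xi$, and then read off the coefficient shift from the definition of $\Omega^{({\bf n})}$. The paper expands $\langle(\partial_{z_n}+\omega_n)e^{({\bf n-1})}_i|$ directly in the basis $\langle e^{({\bf n-1})}_j|$ while you pair with $|d^{({\bf n-1})}_j\rangle$, but these are the same step phrased two ways; your extra remark on the well-definedness of $\Omega^{({\bf n})}$ (using $d\omega=0$ so that $\partial_{z_n}+\omega_n$ commutes with $\nabla_{\omega^{({\bf n-1})}}$) is a correct and slightly more careful addition not spelled out in the paper.
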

\begin{proof}
The definition of $\Omega^{({\bf n})}$ in eq.~(\ref{def_Omega_left}) implies that
\bq
 \left\langle \left(\partial_{z_n}+\omega_{n}\right) e^{({\bf n-1})}_i \right|
 & = &
 \Omega^{({\bf n})}_{ij}
 \left\langle e^{({\bf n-1})}_{j} \right|,
\eq
and with eq.~(\ref{def_representative_e_n_minus_1}) we have
\bq
 \left\langle \nabla_\omega \xi \right|
 & = &
 \sum\limits_{j=1}^{\nu_{\bf n-1}}
 \left( \partial_{z_n}f_j + f_i \Omega^{({\bf n})}_{ij} \right) dz_n \wedge \left\langle e^{({\bf n-1})}_j \right|.
\eq
The claim follows.
\end{proof}
In physics, we think of eq.~(\ref{gauge_trafo_left}) as a gauge transformation.
The transformation properties of the representatives of the coefficients $| \varphi^{({\bf n})}_{R,j}\rangle$
are similar:
\begin{proposition}
\label{proposition_2}
Consider the cohomology class $| \varphi^{({\bf n})}_R \rangle \in (H^{({\bf n})}_\omega)^\ast$
and expand $| \varphi^{({\bf n})}_R \rangle$ 
in the dual basis $| d^{({\bf n-1})}_j \rangle$
of $( H^{({\bf n-1})}_\omega )^\ast$:
\bq
 \left| \varphi^{({\bf n})}_R \right\rangle
 & = &
 \sum\limits_{j=1}^{\nu_{\bf n-1}}
 \left| d^{({\bf n-1})}_j \right\rangle
 \wedge
 \left| \varphi^{({\bf n})}_{R,j} \right\rangle,
\eq
Define $\hat{\varphi}^{({\bf n})}_{R,j}$ by $\varphi^{({\bf n})}_{R,j} = \hat{\varphi}^{({\bf n})}_{R,j} dz_n$.
Changing the representative amounts to
\bq
\label{gauge_trafo_right}
 \left| \varphi^{({\bf n})}_R \right\rangle
 & \rightarrow &
 \left| \varphi^{({\bf n})}_R \right\rangle
 + 
 \left| \nabla_{-\omega} \xi \right\rangle,
\eq
for some $(n-1)$-form $\xi$.
Let us now consider transformations which are generated by $(n-1)$-forms $\xi$ 
of the type
\bq
 \xi
 & = &
 \left(-1\right)^{n-1}
 \sum\limits_{j=1}^{\nu_{\bf n-1}}
 f_j\left(z_n\right)
 \left| d^{({\bf n-1})}_j \right\rangle,
\eq
where the functions $f_j(z_n)$ depend only on $z_n$, but not on $z_1,\dots,z_{n-1}$.
Then the representatives of the coefficients transform as
\bq
\label{right_invariance}
 \hat{\varphi}^{({\bf n})}_{R,j}
 & \rightarrow &
 \hat{\varphi}^{({\bf n})}_{R,j}
 +
 \left( \delta_{j k} \partial_{z_n} - \Omega^{({\bf n})}_{jk} \right) f_k,
\eq
where the $(\nu_{\bf n-1} \times \nu_{\bf n-1})$-matrix $\Omega^{({\bf n})}$ is now defined by
\bq
\label{def_Omega_right}
 \Omega^{({\bf n})}_{ij}
 & = & 
 -
 \left\langle e^{({\bf n-1})}_{i} \right| \left. \left(\partial_{z_n}-\omega_{n}\right) d^{({\bf n-1})}_j  \right\rangle.
\eq
\end{proposition}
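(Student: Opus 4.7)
The plan is to follow the template of the proof of Proposition~\ref{proposition_1}, but carefully tracking the reversed wedge-product convention used on the dual side. The strategy is to compute $\nabla_{-\omega}\xi$ explicitly, expand the result in the basis $|d_k^{({\bf n-1})}\rangle$ of $(H^{({\bf n-1})}_\omega)^\ast$, and read off the shift of each coefficient $\hat{\varphi}^{({\bf n})}_{R,j}$. The shift of the coefficient with label $j$ can be obtained either directly by collecting terms, or equivalently by pairing with the basis $\langle e^{({\bf n-1})}_j|$ of $H^{({\bf n-1})}_\omega$, which is the dual projector to $|d^{({\bf n-1})}_k\rangle$.

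The first step is to reformulate the defining equation~(\ref{def_Omega_right}) of $\Omega^{({\bf n})}$ as a cohomological identity in $(H^{({\bf n-1})}_\omega)^\ast$. Since $\langle e^{({\bf n-1})}_i|d^{({\bf n-1})}_k\rangle=\delta_{ik}$, the definition is equivalent to
\begin{equation}
\left|\left(\partial_{z_n}-\omega_n\right)d^{({\bf n-1})}_j\right\rangle \;=\; -\,\Omega^{({\bf n})}_{kj}\left|d^{({\bf n-1})}_k\right\rangle,
\end{equation}
which is the dual counterpart of the identity used in Proposition~\ref{proposition_1}. The second step is to compute $\nabla_{-\omega}\xi$ for $\xi=(-1)^{n-1}\sum_j f_j(z_n)|d^{({\bf n-1})}_j\rangle$. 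Writing $|d^{({\bf n-1})}_j\rangle=\hat{d}^{({\bf n-1})}_j\,dz_1\wedge\dots\wedge dz_{n-1}$, the only surviving term in $d\xi$ and in $-\omega\wedge\xi$ is the one proportional to $dz_n$, because any $dz_k$ with $k<n$ wedged with $dz_1\wedge\dots\wedge dz_{n-1}$ vanishes. Thus
\begin{equation}
\nabla_{-\omega}\xi \;=\; (-1)^{n-1}\sum_{j}\Bigl[\bigl(\partial_{z_n}f_j\bigr)\,d^{({\bf n-1})}_j + f_j\,\bigl(\partial_{z_n}-\omega_n\bigr)d^{({\bf n-1})}_j\Bigr]\wedge dz_n.
\end{equation}
The factor $(-1)^{n-1}$ carried by $\xi$ is precisely what is needed to cancel the sign from reordering $dz_n\wedge dz_1\wedge\dots\wedge dz_{n-1}=(-1)^{n-1}dz_1\wedge\dots\wedge dz_n$, so that the result ends up in the canonical right-convention order.

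The third step is to substitute the cohomological identity from the first step into the second term, which converts $(\partial_{z_n}-\omega_n)d^{({\bf n-1})}_j$ into $-\Omega^{({\bf n})}_{kj}\,d^{({\bf n-1})}_k$ modulo $\nabla_{-\omega^{({\bf n-1})}}$-exact forms in the fibre. Such exact pieces are invisible to the expansion in the $(n-1)$-dimensional dual basis (equivalently, they are annihilated by pairing with $\langle e^{({\bf n-1})}_j|$). Relabelling indices and reading off the change of the $j$-th coefficient then produces
\begin{equation}
\hat{\varphi}^{({\bf n})}_{R,j}\;\to\;\hat{\varphi}^{({\bf n})}_{R,j} + \partial_{z_n}f_j - \Omega^{({\bf n})}_{jk}f_k \;=\; \hat{\varphi}^{({\bf n})}_{R,j} + \bigl(\delta_{jk}\partial_{z_n}-\Omega^{({\bf n})}_{jk}\bigr)f_k,
\end{equation}
which is eq.~(\ref{right_invariance}).

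The main obstacle is entirely bookkeeping: keeping track of the sign $(-1)^{n-1}$ that flips between the left convention $dz_n\wedge\dots\wedge dz_1$ and the right convention $dz_1\wedge\dots\wedge dz_n$, and verifying that the minus sign appearing in the definition~(\ref{def_Omega_right}) of $\Omega^{({\bf n})}$ on the dual side (absent in~(\ref{def_Omega_left})) is consistent with the sign $-\Omega_{jk}^{({\bf n})}$ appearing in~(\ref{right_invariance}), versus $+\Omega_{ij}^{({\bf n})}$ in~(\ref{left_invariance}). The conceptual content of the proof is otherwise identical to that of Proposition~\ref{proposition_1}.
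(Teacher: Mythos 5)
Your proof is correct and follows the paper's own approach: convert the definition~(\ref{def_Omega_right}) of $\Omega^{({\bf n})}$ into the identity $\left|(\partial_{z_n}-\omega_n)d^{({\bf n-1})}_j\right\rangle = -\left|d^{({\bf n-1})}_i\right\rangle\Omega^{({\bf n})}_{ij}$, and then expand $\nabla_{-\omega}\xi$ exactly as in the proof of Proposition~\ref{proposition_1}. One cosmetic slip worth fixing: in your displayed expression for $\nabla_{-\omega}\xi$ you keep the overall $(-1)^{n-1}$ even though $[\ldots]\wedge dz_n$ is already written in the order $dz_1\wedge\dots\wedge dz_{n-1}\wedge dz_n$; as your own prose correctly explains, that prefactor is exactly consumed when reordering $dz_n\wedge dz_1\wedge\dots\wedge dz_{n-1}=(-1)^{n-1}dz_1\wedge\dots\wedge dz_n$, so it should not appear in that line.
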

\begin{proof}
The definition of $\Omega^{({\bf n})}$ in eq.~(\ref{def_Omega_right}) implies that
\bq
 \left| \left(\partial_{z_n}-\omega_{z_n}\right) d^{({\bf n-1})}_{j} \right\rangle
 & = &
 - \left| d^{({\bf n-1})}_{i} \right\rangle \Omega^{(k)}_{ij}.
\eq
The rest of the proof proceeds as in the case of proposition~\ref{proposition_1}.
\end{proof}
\begin{proposition}
\label{proposition_3}
The definitions of $\Omega^{({\bf n})}$ in eq.~(\ref{def_Omega_left})
and in eq.~(\ref{def_Omega_right}) agree.
\end{proposition}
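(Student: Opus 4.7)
The strategy is to differentiate the orthonormality relation $\langle e^{({\bf n-1})}_i | d^{({\bf n-1})}_j \rangle = \delta_{ij}$ with respect to $z_n$. Since the right-hand side is independent of $z_n$, the derivative vanishes identically. The key step is then the Leibniz-type identity
\[
\partial_{z_n} \left\langle e^{({\bf n-1})}_i \right| \left. d^{({\bf n-1})}_j \right\rangle = \left\langle (\partial_{z_n}+\omega_n) e^{({\bf n-1})}_i \right| \left. d^{({\bf n-1})}_j \right\rangle + \left\langle e^{({\bf n-1})}_i \right| \left. (\partial_{z_n}-\omega_n) d^{({\bf n-1})}_j \right\rangle,
\]
in which the $\omega_n$ insertions enter with opposite signs on the two slots because the dual cohomology group is $(H^{({\bf n-1})}_\omega)^\ast = H^{({\bf n-1})}_{-\omega}$. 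Reading off the two terms on the right through the definitions~(\ref{def_Omega_left}) and (\ref{def_Omega_right}), they are precisely the two proposed expressions for $\Omega^{({\bf n})}_{ij}$ differing by a sign. Setting the whole relation to zero yields agreement.

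To justify the Leibniz rule, the cleanest viewpoint is that $\partial_{z_n}+\omega_n$ and $\partial_{z_n}-\omega_n$ are the covariant derivatives of the Gauss--Manin connections on the cohomology bundles $H^{({\bf n-1})}_\omega$ and $(H^{({\bf n-1})}_\omega)^\ast$ along the base direction $z_n$. The intersection pairing is flat with respect to this pair, which is exactly the standard statement that dual connections satisfy a Leibniz rule with respect to the duality pairing. A more hands-on derivation differentiates under the integral sign in the definition of the intersection number, tracks how $z_n$ enters through both the representatives and the compactification map $\iota_\omega$, and uses bilinearity of the pairing together with the fact that $\omega_n$ is precisely the logarithmic $z_n$-derivative of the twist factor $u=\prod p_i^{\gamma_i}$.

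The main obstacle I anticipate is the Leibniz step itself. Because $\omega_n$ is a rational function of all variables $z_1,\dots,z_n$, the insertions $\omega_n e^{({\bf n-1})}_i$ and $\omega_n d^{({\bf n-1})}_j$ are \emph{not} scalars that can be pulled out of the $(n-1)$-fold pairing. Their cancellation in the Leibniz rule therefore requires genuine input from the twisted intersection theory: either the abstract argument via horizontality of the Gauss--Manin connection combined with the covariant constancy of $\delta_{ij}$, or the concrete observation that the $z_n$-variation of $u^{+1}$ in the first slot and of $u^{-1}$ in the second slot produces the two terms with opposite signs of $\omega_n$. Once the Leibniz rule is secured, the identification with~(\ref{def_Omega_left}) and~(\ref{def_Omega_right}) is immediate.
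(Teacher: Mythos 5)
Your proposal is correct and follows the same route as the paper: differentiate $\langle e^{({\bf n-1})}_j \mid d^{({\bf n-1})}_k \rangle = \delta_{jk}$ with respect to $z_n$, invoke the Leibniz rule with the covariant derivatives $\partial_{z_n}+\omega_n$ and $\partial_{z_n}-\omega_n$ in the two slots, then identify the two resulting terms with $\Omega^{({\bf n})}_L$ and $\Omega^{({\bf n})}_R$. The paper states the Leibniz step without elaboration, whereas you pause to justify it via flatness of the pairing under the Gauss--Manin connection, but the core argument is identical.
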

\begin{proof}
Suppose eq.~(\ref{def_Omega_left}) defines 
a matrix $\Omega^{({\bf n})}_L$ and eq.~(\ref{def_Omega_right}) defines a matrix $\Omega^{({\bf n})}_R$.
Then
\bq
 0 & = &
 \partial_{z_n} \left( \left\langle \left. e^{({\bf n-1})}_{j} \right| d^{({\bf n-1})}_k \right\rangle \right)
 \; = \; 
 \left\langle \left. \left( \partial_{z_n} + \omega_n \right) e^{({\bf n-1})}_{j} \right| d^{({\bf n-1})}_k \right\rangle
 +
 \left\langle \left. e^{({\bf n-1})}_{j} \right| \left( \partial_{z_n} - \omega_n \right) d^{({\bf n-1})}_k \right\rangle
 \nonumber \\
 & = &
 \Omega^{({\bf n})}_{L,ji}
 \left\langle \left. e^{({\bf n-1})}_{i} \right| d^{({\bf n-1})}_k \right\rangle
 -
 \left\langle \left. e^{({\bf n-1})}_{j} \right| d^{({\bf n-1})}_i \right\rangle
 \Omega^{({\bf n})}_{R,ik}
 \; = \;
 \Omega^{({\bf n})}_{L,jk}
 -
 \Omega^{({\bf n})}_{R,jk}.
\eq
\end{proof}


\section{Reduction to simple poles}
\label{sect:reduction}

In this section we show how the transformations in eq.~(\ref{left_invariance}) and eq.~(\ref{right_invariance})
can be used to reduce the vector of coefficients
$\langle \varphi^{({\bf n})}_{L,j}|$ and $| \varphi^{({\bf n})}_{R,j}\rangle$
to a form where only simple poles in the variable $z_n$ occur.
In this section we deal at all stages only with univariate rational functions (in the variable $z_n$).
This is a significant simplification compared to the multivariate case.
In particular we may use partial fraction decomposition in the variable $z_n$.
A rational function in the variable $z_n$
\bq
 r\left(z_n\right)
 & = &
 \frac{P\left(z_n\right)}{Q\left(z_n\right)},
 \;\;\;\;\;\;\;\;\;
 P, Q \; \in \; {\mathbb K}\left[z_n\right]
 \;\;\;\;\;\;\;\;\;
 \gcd\left(P,Q\right) \; = \; 1,
\eq
has only simple poles if $\deg P < \deg Q$ and if in the partial fraction decomposition each
irreducible polynomial in the denominator occurs only to power $1$.
The condition $\deg P < \deg Q$ ensures that there are no higher poles at infinity.

In this section we will assume that (i) all entries of $\Omega^{({\bf n})}$ have only simple poles
and (ii) that the linear systems discussed below have a unique solution.

Assumption (i) depends on our choice $\hat{e}^{({\bf n-1})}_j$ for the basis of $H^{({\bf n-1})}_\omega$.
If the initial choice $\hat{e}^{({\bf n-1})}_j$ for the basis of $H^{({\bf n-1})}_\omega$ does not satisfy 
assumption (i) we search for a transformation to a new basis $\hat{e}^{({\bf n-1})}_j{}'$ such
that assumption (i) is satisfied.
This is completely analogous to the way we transform the differential equation for a set of Feynman master integrals.
All techniques can be carried over.
In particular,
using Moser's algorithm \cite{Moser:1959,Lee:2014ioa} 
we may always transform to a new basis $\hat{e}^{({\bf n-1})}_j{}'$
such that $\Omega^{({\bf n})}{}'$ has only simple poles except possibly at one point.
In the context of Feynman integrals we are not aware of an example, where higher poles at a single point remain.

Assumption (ii) boils down to our requirement that the exponents $\gamma_i$ in eq.~(\ref{def_u}) are generic,
in particular non-integer.

It is sufficient to discuss the reduction to simple poles for a $\nu$-dimensional vector $\hat{\varphi}_j$
($1 \le j \le \nu$)
which transforms as
\bq
\label{generic_invariance}
 \hat{\varphi}_{j}
 & \rightarrow &
 \hat{\varphi}_{j}
 +
 \left( \delta_{j k} \partial_{z_n} + \Omega_{jk} \right) f_k.
\eq
The reduction of $\langle \varphi^{({\bf n})}_{L,j}|$ is then achieved by setting $\Omega = (\Omega^{({\bf n})})^T$,
the reduction of $| \varphi^{({\bf n})}_{R,j}\rangle$ is achieved by setting $\Omega = -\Omega^{({\bf n})}$.
In both case we have $\nu=\nu_{\bf n-1}$.

\begin{lemma}
\label{lemma_4}
Assume that $\Omega$ has only simple poles and that
the vector $\hat{\varphi}_{j}$ 
has a pole of order $o > 1$ at infinity.
A gauge transformation with the seed
\bq
 f_j\left(z_n\right) & = & c_j z_n^{o-1},
 \;\;\;\;\;\;\;\;\;
 c_j \; \in \; {\mathbb K}
\eq
reduces the order of the pole at infinity, provided the linear system obtained from 
the condition that the $\nu$ equations
\bq
\label{eq_seed_inf}
 \hat{\varphi}_{j}
 +
 \left( \delta_{j k} \partial_{z_n} + \Omega_{jk} \right) f_k
\eq
have only poles of order $(o-1)$ at infinity, 
yields a solution for the $\nu$ coefficients $c_j$.
Furthermore, this gauge transformation does not introduce higher poles elsewhere.
\end{lemma}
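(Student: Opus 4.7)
My plan is to analyze the gauge transformation $\hat{\varphi}_j \mapsto \hat{\varphi}_j + (\delta_{jk}\partial_{z_n} + \Omega_{jk}) f_k$ with the ansatz $f_k = c_k z_n^{o-1}$ by reading off its Laurent expansion separately at infinity and at each finite singularity of $\Omega$. The content then splits into two essentially independent checks: deriving the advertised linear system from the pole-reduction condition at infinity, and verifying that the finite pole structure of $\hat{\varphi}_j$ is not worsened.

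At infinity lies the non-trivial part. The simple-pole hypothesis on $\Omega$ forces the Laurent expansion $\Omega_{jk}(z_n) = (A_\infty)_{jk}/z_n + O(1/z_n^2)$ with a constant matrix $A_\infty$ (a constant term in $\Omega$ would give the $1$-form $\Omega\,dz_n$ a double pole at infinity and is excluded). The hypothesis on $\hat{\varphi}_j$, read in the $1$-form convention for pole order, translates into $\hat{\varphi}_j = b_j z_n^{o-2} + O(z_n^{o-3})$ for some leading coefficients $b_j \in \mathbb{K}$. A direct substitution yields
\[
  \partial_{z_n} f_j + \Omega_{jk} f_k \;=\; \bigl[(o-1)\,c_j + (A_\infty)_{jk}\,c_k\bigr]\, z_n^{o-2} + O(z_n^{o-3}),
\]
so demanding that the $z_n^{o-2}$-coefficient of $\hat{\varphi}_j + \partial_{z_n} f_j + \Omega_{jk} f_k$ vanish produces the $\nu$-dimensional linear system
\[
  \bigl[(o-1)\delta_{jk} + (A_\infty)_{jk}\bigr]\, c_k \;=\; -\, b_j,
\]
which is precisely the system referred to in the statement. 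By the genericity assumption~(ii) it admits a (unique) solution, and then by construction the transformed vector has pole of order at most $o-1$ at infinity.

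At each finite point the argument is essentially immediate. Since $f_k = c_k z_n^{o-1}$ is a polynomial in $z_n$, both $f_k$ and $\partial_{z_n} f_k$ are holomorphic on $\mathbb{C}$ and contribute no finite singularities. The remaining term $\Omega_{jk} f_k$ is a polynomial times a rational function whose only finite singularities are simple poles at the poles $a_i$ of $\Omega$, so it has at most a simple pole at each $a_i$ and is regular elsewhere. The gauge transformation therefore adds to $\hat{\varphi}_j$ a rational function whose only finite singularities are simple poles, located exclusively at the pre-existing simple poles of $\Omega$; no pole of order greater than one is created anywhere in $\mathbb{C}$. The main obstacle is really just the leading-coefficient bookkeeping at infinity, including pinning down the convention for ``pole of order $o$ at infinity'' so that the ansatz exponent $o-1$ is the one for which $\partial_{z_n}f_j$ and the $1/z_n$-piece of $\Omega\,f$ share a common leading power with $\hat{\varphi}_j$; once that is done, both parts of the lemma reduce to a matching of Laurent coefficients.
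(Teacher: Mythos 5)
Your proof is correct and follows essentially the same route as the paper: expand at infinity in the one-form convention, observe that $\Omega$'s simple pole at infinity forces $\Omega_{jk}=O(1/z_n)$ so the transformed expression stays at order $o$, set the leading ($z_n^{o-2}$) coefficient to zero to obtain the linear system, and note that the polynomial seed combined with the simple-pole hypothesis on $\Omega$ introduces at worst new simple poles at finite points. You merely make explicit the linear system $[(o-1)\delta_{jk}+(A_\infty)_{jk}]c_k=-b_j$, which the paper leaves implicit; this is a helpful but inessential refinement.
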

\begin{proof}
Since we assume that $\Omega$ has only simple poles it follows that eq.~(\ref{eq_seed_inf})
has at most a pole of order $o$ at infinity.
We obtain a linear system of equations for the unknown coefficients $c_j$ by partial fraction
decomposition of eq.~(\ref{eq_seed_inf}) for each $j$ and subsequently setting the coefficient
of the monomial term $z_n^{o-2}$ to zero.
By assumption, this linear system of equations has a solution for the coefficients $c_j$.
Having determined the coefficients $c_j$, we define $\hat{\varphi}_{j}'$ by
\bq
 \hat{\varphi}_{j}'
 & = &
 \hat{\varphi}_{j}
 +
 \left( \delta_{j k} \partial_{z_n} + \Omega_{jk} \right) f_k.
\eq
This reduces the order of the pole at infinity by one. 
Note that this procedure may introduce new simple poles at finite points (through $\Omega$).
However, the procedure will never introduce higher poles at finite points (since we assumed that $\Omega$ has 
only simple poles).
\end{proof}

Repeating this procedure we may reduce the pole
at infinity to a simple pole.

Let us give a simple example for the assumption that the linear system must yield a solution.
We consider $\nu=1$, $\Omega = \gamma/z_n$ and $\hat{\varphi}=1$.
$\Omega$ has a simple pole at $z_n=0$ and $z_n=\infty$, $\hat{\varphi}=1$ has a pole of order $2$ at infinity
(and no poles elsewhere).
The seed 
\bq
 f & = & c z_n
\eq
leads to the equation
\bq
 1 + c + c \gamma & = & 0
\eq
and will reduce the pole at infinity, provided $\gamma \neq -1$.
Thus we see that the assumption that the linear system has a solution reduces to the assumption that
the exponent $\gamma$ in eq.~(\ref{def_u}) is generic.

The procedure is only slightly more complicated for higher poles at finite points.

\begin{lemma}
\label{lemma_5}
Assume that $\Omega$ has only simple poles.
Let $q \in {\mathbb K}[z_n]$ be an irreducible polynomial appearing in the denominator of the partial fraction
decomposition of the $\hat{\varphi}_{j}$'s at worst to the power $o$.
A gauge transformation with the seed
\bq
 f_j\left(z_n\right) & = & \frac{1}{q^{o-1}} \sum\limits_{k=0}^{\mathrm{deg}(q)-1} c_{j,k} \; z_n^k,
 \;\;\;\;\;\;\;\;\;
 c_{j,k} \; \in \; {\mathbb K}.
\eq
reduces the order, provided the linear system obtained from 
the condition that in the partial fraction decomposition of 
\bq
\label{eq_seed_finite_point}
 \hat{\varphi}_{j}
 +
 \left( \delta_{j k} \partial_{z_n} + \Omega_{jk} \right) f_k
\eq
terms of the form $z_n^k/q^o$ are absent (with $0 \le k \le \mathrm{deg}(q)-1$)
yield a solution for the $(\nu \cdot \mathrm{deg}(q))$ coefficients $c_{j,k}$.
Furthermore, this gauge transformation does not introduce higher poles elsewhere.
\end{lemma}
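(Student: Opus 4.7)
The plan is to mirror the proof of Lemma~\ref{lemma_4}, localising the analysis at the irreducible factor $q$ rather than at infinity.

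First I would verify that the proposed seed does not raise the pole order above $o$ at $q=0$. Differentiating $z_n^k/q^{o-1}$ produces a term $-(o-1)z_n^k q'/q^o$, so $\partial_{z_n} f_j$ has pole order at most $o$ at $q=0$. Since $\Omega$ has only simple poles by assumption, $\Omega_{jk} f_k$ contributes at worst $1/q^o$ at the roots of $q$ (when $\Omega$ has a pole there) and $1/q^{o-1}$ otherwise. Hence the transformed expression in eq.~(\ref{eq_seed_finite_point}) has pole order at most $o$ at $q=0$.

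Next I would count equations and unknowns. In the partial fraction decomposition over $\mathbb{K}$, the coefficient of $1/q^o$ is a polynomial in $z_n$ of degree less than $\deg(q)$, i.e.\ $\deg(q)$ scalar coefficients. Requiring these to vanish for each of the $\nu$ components of the transformed vector gives $\nu\cdot\deg(q)$ linear equations in the $\nu\cdot\deg(q)$ unknowns $c_{j,k}$. By hypothesis this system has a solution, so the $c_{j,k}$ can be chosen to eliminate the $1/q^o$ terms and reduce the pole order at $q$ by one.

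Finally I would check that no higher poles appear elsewhere. At a finite point $z_0$ with $q(z_0)\neq 0$, the seed $f_j$ is regular at $z_0$, so $\partial_{z_n} f_j$ is regular there and $\Omega_{jk} f_k$ inherits only the simple pole of $\Omega$. At infinity, the numerator of $f_j$ has degree at most $\deg(q)-1$ while the denominator has degree $(o-1)\deg(q)$, so for $o\ge 2$ the seed decays; using that $\Omega$ has at worst a simple pole at infinity, $\partial_{z_n} f_j$ and $\Omega_{jk} f_k$ also vanish at infinity, and no new higher poles arise. The main obstacle, as already in Lemma~\ref{lemma_4}, is the solvability hypothesis on the linear system: the argument does not establish existence of the $c_{j,k}$ from scratch but reduces it to a linear-algebra statement, which is ultimately guaranteed by the genericity of the exponents $\gamma_i$ (the toy example preceding the lemma illustrates how this enters, requiring $\gamma\neq -1$).
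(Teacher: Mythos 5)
Your proof is correct and follows essentially the same route as the paper's: a linear system of $\nu\cdot\deg(q)$ equations in $\nu\cdot\deg(q)$ unknowns kills the $z_n^k/q^o$ terms, solvability is assumed by hypothesis, and the simple-pole assumption on $\Omega$ guarantees no new higher poles appear. You supply somewhat more detail than the paper's terse proof — notably the explicit check that the seed raises the pole order at $q$ to at most $o$, and the case analysis at other finite points and at infinity — but the argument is the same.
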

\begin{proof}
The requirement that in the partial fraction decomposition of eq.~(\ref{eq_seed_finite_point})
terms of the form $z_n^k/q^o$ are absent 
defines a linear system with $(\nu \cdot \mathrm{deg}(q))$
unknowns and equations.
By assumption, this linear system of equations has a solution for the coefficients $c_{j,k}$.
Having determined the coefficients $c_{j,k}$, we define $\hat{\varphi}_{j}'$ by
\bq
\label{reduction_finite_point}
 \hat{\varphi}_{j}'
 & = &
 \hat{\varphi}_{j}
 +
 \left( \delta_{j k} \partial_{z_n} + \Omega_{jk} \right) f_k
\eq
This reduces the highest power of $q$ in the denominator by one and repeating this procedure we may lower it to one.
As above, the procedure may introduce new simple poles elsewhere (through $\Omega$),
but it will not introduce new higher poles elsewhere (since we assumed that $\Omega$ has only simple poles).
\end{proof}

Readers familiar with integration-by-parts identities in the context of Feynman integrals \cite{Tkachov:1981wb,Chetyrkin:1981qh}
will certainly recognise the analogy: This is a variant of integration-by-parts reduction.
However, we should stress that contrary to the case of Feynman integrals, the size of the involved linear system
is rather modest, it is given by
\bq
 \dim H^{({\bf i})}_\omega \cdot \mathrm{deg}(q),
\eq
where $\dim H^{({\bf i})}_\omega$ corresponds to the number of master integrals at this stage and
$\mathrm{deg}(q)$ gives the degree of one irreducible polynomial in the denominator in the variable $z_i$.

In applications towards Feynman integrals, the following case is of particular interest:
\begin{lemma}
\label{lemma_6}
Assume that $\Omega$ is of the form $\Omega=\gamma \tilde{\Omega}$, where $\tilde{\Omega}$ 
has only simple poles and is independent of $\gamma$.
Assume further that $\gamma$ is infinitesimal.
Then the linear systems appearing in the lemmata~\ref{lemma_4} and~\ref{lemma_5} have a solution.
\end{lemma}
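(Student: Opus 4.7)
The plan is to analyse the linear systems appearing in Lemmas~\ref{lemma_4} and~\ref{lemma_5} as polynomial perturbations in the small parameter $\gamma$. Since $\Omega = \gamma\tilde{\Omega}$ enters the transformation $\hat{\varphi}_j + (\delta_{jk}\partial_{z_n} + \Omega_{jk}) f_k$ linearly, the coefficient matrix $A(\gamma)$ of each linear system is a polynomial (in fact affine) in $\gamma$, so $\det A(\gamma)$ is a polynomial in $\gamma$. It therefore suffices to show that $\det A(0)\neq 0$, since then $\det A(\gamma)$ is non-zero in a punctured neighbourhood of $0$ and in particular for infinitesimal $\gamma$ the system has a unique solution.

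For Lemma~\ref{lemma_4}, at $\gamma=0$ the gauge term reduces to $\partial_{z_n}(c_k z_n^{o-1}) = (o-1) c_k z_n^{o-2}$, so the condition that the coefficient of $z_n^{o-2}$ at infinity vanishes gives a linear system with matrix $A(0) = (o-1)\,I$. Since $o>1$ by hypothesis, $A(0)$ is invertible and $\det A(\gamma) = (o-1)^{\nu} + O(\gamma)$, which is the desired statement.

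For Lemma~\ref{lemma_5}, at $\gamma=0$ the relevant contribution to the partial-fraction coefficient of $z_n^k/q^o$ comes only from $\partial_{z_n} f_j$. Computing
\begin{eqnarray}
\partial_{z_n} \!\left( \frac{1}{q^{o-1}} \sum_{k=0}^{\deg(q)-1} c_{j,k}\, z_n^k \right)
& = &
\frac{1}{q^{o-1}} \sum_{k} k\, c_{j,k}\, z_n^{k-1}
\; - \; \frac{(o-1)\, q'}{q^{o}} \sum_{k} c_{j,k}\, z_n^k,
\end{eqnarray}
we see that, after partial-fraction decomposition, the condition that the terms $z_n^k/q^o$ vanish for $0\le k \le \deg(q)-1$ becomes, in the residue field $\mathbb{K}[z_n]/(q)$, the equation
\begin{eqnarray}
(o-1)\, q'(z_n) \cdot \sum_{k} c_{j,k}\, z_n^k \;\equiv\; N_j(z_n) \pmod{q},
\end{eqnarray}
where $N_j$ denotes the corresponding numerator extracted from $\hat{\varphi}_j$. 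Because $\mathbb{K}$ has characteristic zero and $q$ is irreducible, $\gcd(q,q')=1$, so multiplication by $(o-1) q'$ is an invertible linear endomorphism of the $\deg(q)$-dimensional $\mathbb{K}$-vector space $\mathbb{K}[z_n]/(q)$. The matrix $A(0)$ therefore decomposes into $\nu$ identical blocks of this invertible endomorphism (one per index $j$), and $\det A(0)\neq 0$.

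The only subtle point in the execution is tracking which contributions actually reach the top-order partial-fraction sector $z_n^k/q^o$ at $\gamma=0$: the term $\Omega_{jk} f_k$ is $O(\gamma)$ and so irrelevant to $A(0)$, while $\partial_{z_n} f_k$ itself produces a contribution in the $1/q^{o-1}$ sector which is merely a subleading correction. Once one is careful that the $q^o$-sector at $\gamma=0$ is governed purely by the $-(o-1) q'/q^{o}$ term, the rest is linear algebra over the field $\mathbb{K}[z_n]/(q)$, and the conclusion follows from the polynomial-non-vanishing argument indicated above.
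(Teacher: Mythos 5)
Your proof takes the same basic route as the paper's: write the coefficient matrix of the linear system as $A(\gamma) = A(0) + \gamma B$ and conclude that $\det A(\gamma) \neq 0$ for infinitesimal $\gamma$ provided $A(0)$ is invertible. The paper states this tersely, asserting the determinant is $\det(1+\gamma X)$ without explaining why the $\gamma$-independent part is (up to normalization) the identity. You supply exactly this missing justification: for Lemma~\ref{lemma_4}, $A(0) = (o-1)I$, invertible since $o>1$ and $\mathrm{char}\,\mathbb{K}=0$; for Lemma~\ref{lemma_5}, $A(0)$ is block-diagonal with each block realizing multiplication by $(o-1)q'$ on $\mathbb{K}[z_n]/(q)$, invertible because $q$ is irreducible and $\gcd(q,q')=1$ in characteristic zero. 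This is a genuine strengthening of the paper's argument, since it actually verifies the leading-order invertibility that the paper's $\det(1+\gamma X)$ takes for granted, while remaining entirely within the paper's perturbative framework.
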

\begin{proof}
In this case the determinant of the linear system is given by
\bq
 \det\left( 1 + \gamma X\right),
\eq
with a matrix $X$ being independent of $\gamma$.
This determinant is non-zero, hence the linear system has a solution.
\end{proof}


\section{The intersection number of univariate vector-valued one-forms with only simple poles}
\label{sect:forumula}

In this section we investigate the intersection of the coefficients
$\langle \varphi^{({\bf n})}_{L,j}|$ and $| \varphi^{({\bf n})}_{R,j}\rangle$.
We may assume that the coefficients have only simple poles.
In this case we may evaluate the intersection number with the help of a global residue,
which may be computed without introducing algebraic extensions.

Let us shortly summarise what we achieved so far:
In order to compute the intersection number
\bq
 \left\langle \varphi_L \right. \left| \varphi_R \right\rangle
 \;\;\;\;\;\;\;\;\;
 \mbox{for}
 \;\;\;\;\;\;
 \left\langle \varphi_L \right| \in H^n_\omega,
 \;\;\;\;\;\;
 \left| \varphi_R \right\rangle \in \left( H^n_\omega \right)^\ast,
\eq
we expand $\langle \varphi_L |$ in the basis of $H^{({\bf n})}_\omega$
and $| \varphi_R \rangle$ in the dual basis of $(H^{({\bf n})}_\omega)^\ast$
\bq
\label{basis_expansion}
 \left\langle \varphi_L \right|
 \; = \;
 \sum\limits_{j=1}^{\nu_{\bf n-1}}
 \left\langle \varphi^{({\bf n})}_{L,j} \right| \wedge \left\langle e^{({\bf n-1})}_j \right|,
 \;\;\;
 & &
 \;\;\;
 \left| \varphi_R \right\rangle
 \; = \;
 \sum\limits_{j=1}^{\nu_{\bf n-1}}
 \left| d^{({\bf n-1})}_j \right\rangle
 \wedge
 \left| \varphi^{({\bf n})}_{R,j} \right\rangle.
\eq
Due to
\bq
 \left\langle e^{({\bf n-1})}_j \left| d^{({\bf n-1})}_k \right. \right\rangle
 & = &
 \delta_{j k}
\eq
the intersection number becomes
\bq
\label{intersection_coeffs}
 \left\langle \varphi_L \right. \left| \varphi_R \right\rangle
 & = &
 \sum\limits_{j=1}^{\nu_{\bf n-1}}
 \left\langle \varphi^{({\bf n})}_{L,j} \left| \varphi^{({\bf n})}_{R,j} \right. \right\rangle.
\eq
Due to the results of section~\ref{sect:reduction}
we may assume that $\langle \varphi^{({\bf n})}_{L,j} |$ and $| \varphi^{({\bf n})}_{R,j} \rangle$
have only simple poles in $z_n$.
It remains to compute the right-hand side of eq.~(\ref{intersection_coeffs}).

The algorithm of \cite{Mizera:2019gea,Frellesvig:2019uqt} computes the right-hand side of 
eq.~(\ref{intersection_coeffs}) as a sum over the residues at the singular points of $\Omega^{({\bf n})}$.
This requires local solutions $\hat{\psi}^{({\bf n})}_{L,i}$ or $\hat{\psi}^{({\bf n})}_{R,k}$ of
\bq
\label{local_solutions}
 \hat{\psi}^{({\bf n})}_{L,i} \left( \overleftarrow{\partial}_{z_n} \delta_{i j} + \Omega^{({\bf n})}_{i j} \right)
 \; = \; 
 \hat{\varphi}^{({\bf n})}_{L,j}
 & \mbox{or} &
 \left( \partial_{z_n} \delta_{j k} - \Omega^{({\bf n})}_{j k} \right)
 \hat{\psi}^{({\bf n})}_{R,k}
 \; = \; 
 \hat{\varphi}^{({\bf n})}_{R,j}.
\eq
In general, the singular points of $\Omega^{({\bf n})}$ are given by roots.
It is at this stage where the algorithm of \cite{Mizera:2019gea,Frellesvig:2019uqt} introduces algebraic
extensions.

On the other hand, it is known (even in the multi-variate case) \cite{Mizera:2017rqa} that in the case where
all polynomials $p_i$ in eq.~(\ref{def_polynomials}) are linear in the variables $z_j$ (i.e. each polynomial
defines a hyperplane)
and where $\langle \varphi_L |$ and $| \varphi_R \rangle$ have at most a simple pole along the divisor $D$,
the left-hand side can be evaluated as a sum over the residues at the critical points of $\omega$.
The critical points of $\omega$ are the points $(z_1,\dots,z_n) \in {\mathbb C}^n$ where
\bq 
 \omega & = & 0.
\eq
This sum does not involve local solutions of eq.~(\ref{local_solutions}).
It is a global residue and can be evaluated without knowing the positions of the critical points, along the lines of
ref.~\cite{Cattani:2005,Sogaard:2015dba}.

We would like to get around the restriction that all polynomials $p_i$ in eq.~(\ref{def_polynomials}) define
hyperplanes.
Our aim is to evaluate the right-hand side as a global residue over a suitable defined set of ``critical points''.

Let us first discuss the simplest case $n=1$.
Since $\dim H^{({\bf 0})}_\omega = 1$ this is a ``scalar'' case.
We write
\bq
 \omega \; = \; \omega_1 dz_1,
 & &
 \omega_1 \; = \; \frac{P}{Q},
 \;\;\;\;\;\;\;\;\;
 P, Q \; \in \; {\mathbb K}\left[z_1\right],
 \;\;\;\;\;\;\;\;\;
 \gcd\left(P,Q\right) \; = \; 1.
\eq
Let ${\mathcal C}_1$ be the set of critical points of $\omega$, e.g.
\bq
 {\mathcal C}_1 & = & 
 \left\{ \; z_1 \in {\mathbb C} \; | \; P\left(z_1\right) = 0 \; \right\}.
\eq
Closely related to ${\mathcal C}_1$ is the ideal $I_1 \subseteq {\mathbb K}[z_1]$ generated by
\bq
 I_1 & = & 
\left\langle P \right\rangle.
\eq
We have ${\mathcal C}_1 = V(I_1)$, where $V(I)$ denotes the algebraic variety corresponding to the ideal $I$.
$I_1$ is a principal ideal, and $P$ is automatically a Gr\"obner basis for $I_1$.
In the case where $\langle \varphi_L |$ and $| \varphi_R \rangle$ have at most only simple poles in $z_1$ 
the intersection number is given by the global residue
\bq
\label{global_residue_scalar_case}
 \left\langle \varphi_L \right. \left| \varphi_R \right\rangle
 & = &
 - \mathrm{res}_{\langle P \rangle}\left( Q \; \hat{\varphi}_{L} \hat{\varphi}_{R} \right)
 \; = \;
 - \mathrm{res}_{\langle P \rangle}\left( Q \; \hat{\varphi}_{L,1} \hat{\varphi}_{R,1} \right).
\eq
The notation $\mathrm{res}_{\langle P \rangle}(f)$ denotes the global residue of $f$ with respect to $\langle P \rangle$,
i.e. the sum of all local residues of $f$ at the points $z_1$ where $P(z_1)=0$.
As $P$ is a polynomial, this set of points is finite.

Since $\nu_{\bf 0} = \dim H^{({\bf 0})}_\omega = 1$ the expansions in eq.~(\ref{basis_expansion}) are trivial
and we have (with $\langle e^{({\bf 0})}_1 | = | d^{({\bf 0})}_1 \rangle = 1$)
\bq
 \hat{\varphi}_{L} \; = \; \hat{\varphi}_{L,1},
 & &
 \hat{\varphi}_{R} \; = \; \hat{\varphi}_{R,1}.
\eq
We now have to generalise eq.~(\ref{global_residue_scalar_case}) from the scalar case to the vectorial case.
We may think of $\omega_1$ as a $1 \times 1$-matrix. The critical points ${\mathcal C}_1$ are the points,
where $\omega_1$ has not full rank.
Let us now consider the $(\nu_{\bf n-1} \times \nu_{\bf n-1})$-matrix $\Omega^{({\bf n})}$.
We write
\bq
\label{def_P_Q}
 \det\left(\Omega^{({\bf n})}\right) & = & \frac{P}{Q},
 \;\;\;\;\;\;\;\;\;
 P, Q \; \in \; {\mathbb K}\left[z_n\right],
 \;\;\;\;\;\;\;\;\;
 \gcd\left(P,Q\right) \; = \; 1.
\eq
We define ${\mathcal C}_n$ as the set of points, where $\Omega^{({\bf n})}$ does not have full rank, i.e.
\bq
 {\mathcal C}_n & = & 
 \left\{ \; z_n \in {\mathbb C} \; | \; P\left(z_n\right) = 0 \; \right\}.
\eq
Similarly, we define $I_n$ as the ideal in ${\mathbb K}[z_n]$ generated by $P$:
\bq
 I_n & = & 
\left\langle P \right\rangle.
\eq
Again we have ${\mathcal C}_n = V(I_n)$.
$I_n$ is a principal ideal, and $P$ is automatically a Gr\"obner basis for $I_n$.
Finally, we denote by $\mathrm{adj} \; \Omega^{({\bf n})}$ the adjoint matrix of $\Omega^{({\bf n})}$.
This matrix satisfies
\bq
 \Omega^{({\bf n})} \cdot \left( \mathrm{adj} \; \Omega^{({\bf n})} \right)
 \; = \;
 \left( \mathrm{adj} \; \Omega^{({\bf n})} \right) \cdot \Omega^{({\bf n})}
 \; = \;
 \det\left(\Omega^{({\bf n})}\right) \cdot {\bf 1}.
\eq
We may now state the generalisation of eq.~(\ref{global_residue_scalar_case}) to the vectorial case:
\begin{theorem}
\label{theorem_7}
Assume that $\Omega^{({\bf n})}$, $\langle \varphi^{({\bf n})}_{L,j} |$ and $| \varphi^{({\bf n})}_{R,j} \rangle$
have at most only simple poles in $z_n$.
Then
\bq
\label{global_residue_vector_case}
 \left\langle \varphi_L \right| \left. \varphi_R \right\rangle
 & = &
 - \mathrm{res}_{\langle P \rangle}\left( Q \; \hat{\varphi}_{L,i} \left( \mathrm{adj} \; \Omega^{({\bf n})} \right)_{ i j} \hat{\varphi}_{R,j} \right),
\eq
with $Q$ defined as in eq.~(\ref{def_P_Q}).
\end{theorem}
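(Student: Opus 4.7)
My plan is to reduce the claim to the existing recursive algorithm of \cite{Mizera:2019gea,Frellesvig:2019uqt}, reviewed in Appendix~\ref{sect:laurent}, combined with the global residue theorem on $\mathbb{P}^1$. First, using $\mathrm{adj}\,\Omega^{({\bf n})} = \det(\Omega^{({\bf n})})\cdot(\Omega^{({\bf n})})^{-1}$ together with $\det(\Omega^{({\bf n})}) = P/Q$, the claimed identity is equivalent to
\[
\left\langle \varphi_L \right| \left. \varphi_R \right\rangle
 \;=\;
 -\sum_{z^* \in {\mathcal C}_n} \mathrm{Res}_{z_n = z^*}(\eta),
 \qquad
 \eta \;:=\; \hat{\varphi}_{L,i}\,\bigl(\Omega^{({\bf n})}\bigr)^{-1}_{ij}\,\hat{\varphi}_{R,j}\, dz_n.
\]
This is the natural matrix-valued analogue of the scalar formula~(\ref{global_residue_scalar_case}): when $\nu_{\bf n-1}=1$ one has $(\Omega^{({\bf n})})^{-1} = Q/P$ and the two formulae coincide.

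The poles of the rational one-form $\eta$ on $\mathbb{P}^1$ come in three types: the points $z^* \in {\mathcal C}_n$ (zeros of $\det\Omega^{({\bf n})}$, where $(\Omega^{({\bf n})})^{-1}$ has a pole while $\Omega^{({\bf n})}$, $\hat{\varphi}_L$, $\hat{\varphi}_R$ are regular); the simple poles of $\Omega^{({\bf n})}$ (where $(\Omega^{({\bf n})})^{-1}$ has a first-order zero but $\hat{\varphi}_L$, $\hat{\varphi}_R$ may have simple poles); and possibly $z_n = \infty$. The residue theorem on $\mathbb{P}^1$ then gives
\[
 \sum_{z^* \in {\mathcal C}_n} \mathrm{Res}_{z^*}(\eta)
 \;+\; \sum_{z^*\ \text{pole of }\Omega^{({\bf n})}} \mathrm{Res}_{z^*}(\eta)
 \;+\; \mathrm{Res}_{\infty}(\eta) \;=\; 0,
\]
so the claim reduces to showing that the last two contributions evaluate to $\langle \varphi_L|\varphi_R\rangle$.

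For this I would invoke Appendix~\ref{sect:laurent}, which expresses $\langle \varphi_L|\varphi_R\rangle$ as $\sum_{z^*} \mathrm{Res}_{z_n=z^*}(\hat{\psi}_{L,i}\,\hat{\varphi}_{R,i})$ summed over the singular points of $\Omega^{({\bf n})}$ (including $\infty$), where $\hat{\psi}_L$ is the unique holomorphic local solution of eq.~(\ref{holomorphic_solution}). The required identity becomes the pointwise matching $\mathrm{Res}_{z^*}(\hat{\psi}_L\,\hat{\varphi}_R) = \mathrm{Res}_{z^*}(\eta)$ at each singular $z^*$. At a simple pole of $\Omega^{({\bf n})}$, expand $\Omega^{({\bf n})} = A/(z_n - z^*) + O(1)$ with $A$ invertible; matching the leading Laurent coefficient of the defining equation for $\hat{\psi}_L$ forces $\hat{\psi}_L(z^*) = \mathrm{Res}_{z^*}(\hat{\varphi}_L)\cdot A^{-1}$, while $(\Omega^{({\bf n})})^{-1} = (z_n - z^*)\, A^{-1} + O((z_n - z^*)^2)$ vanishes to first order at $z^*$. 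A short Laurent calculation then shows that both residues equal $\mathrm{Res}_{z^*}(\hat{\varphi}_L)\cdot A^{-1}\cdot \mathrm{Res}_{z^*}(\hat{\varphi}_R)$.

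The main obstacle in this plan is the uniform treatment of the point at infinity and of points where the poles of $\Omega^{({\bf n})}$, $\hat{\varphi}_L$ and $\hat{\varphi}_R$ coincide. For infinity one passes to the chart $w = 1/z_n$ and repeats the local matching, with assumption~(\extraassumption) (which fixes $|{\mathcal C}_n| = \nu_{\bf n}$) ensuring the degree balance between $P$ and $Q$ so that no spurious contribution from infinity is missed. When several singularities coalesce at one $z^*$ the same matching argument applies, but one must carry enough Laurent orders to see all cancellations — routine but book-keeping-heavy.
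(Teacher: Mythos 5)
Your proposal is correct and follows essentially the same route as the paper's proof: starting from the sum-over-singular-points formula of Appendix~\ref{sect:laurent}, matching residues locally by replacing the holomorphic local solution $\hat{\psi}_L$ with $\hat{\varphi}_L\,(\Omega^{({\bf n})})^{-1}$ via a Laurent comparison at each simple pole, and then transferring the residue sum from the singular points of $\Omega^{({\bf n})}$ to the critical points $\{P=0\}$. The only cosmetic difference is that the paper presents this last step as a contour deformation on $\mathbb{C}$ (drawn in a figure), whereas you invoke the residue theorem on $\mathbb{P}^1$ directly — two phrasings of the same fact.
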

\begin{proof}
We start from eq.~(\ref{basis_expansion})
\bq
 \left\langle \varphi_L \right|
 \; = \;
 \sum\limits_{j=1}^{\nu_{\bf n-1}}
 \left\langle \varphi^{({\bf n})}_{L,j} \right| \wedge \left\langle e^{({\bf n-1})}_j \right|,
 \;\;\;
 & &
 \;\;\;
 \left| \varphi_R \right\rangle
 \; = \;
 \sum\limits_{j=1}^{\nu_{\bf n-1}}
 \left| d^{({\bf n-1})}_j \right\rangle
 \wedge
 \left| \varphi^{({\bf n})}_{R,j} \right\rangle,
\eq
and we assume that the coefficients $\varphi^{({\bf n})}_{L,j}$ and $\varphi^{({\bf n})}_{R,j}$
have only simple poles in the variable $z_n$.
The intersection number is then given by
\bq
\label{to_be_shown}
 \left\langle \varphi_L \right. \left| \varphi_R \right\rangle
 & = &
 \sum\limits_{z_0 \in  {\mathcal S}_{\bf n}} 
 \sum\limits_{j=1}^{\nu_{\bf n-1}}
 \mathop{\mathrm{res}}_{z_n=z_0}
 \left(\hat{\psi}^{({\bf n})}_{L,j}
 \left| \varphi^{({\bf n})}_{R,j} \right\rangle \right)
 \; = \;
 -
 \sum\limits_{z_0 \in  {\mathcal S}_{\bf n}} 
 \sum\limits_{j=1}^{\nu_{\bf n-1}}
 \mathop{\mathrm{res}}_{z_n=z_0} \left(
 \left\langle \varphi^{({\bf n})}_{L,j} \right|
 \hat{\psi}^{({\bf n})}_{R,j}
 \right).
\eq
We have to show that eq.~(\ref{global_residue_vector_case})
agrees with eq.~(\ref{to_be_shown}).
Since $\varphi^{({\bf n})}_{L,j}$ and $\varphi^{({\bf n})}_{R,j}$ have only simple poles
(and $\Omega^{({\bf n})}$ has only simple poles as well),
$\hat{\psi}^{({\bf n})}_{L,j}$ and $\hat{\psi}^{({\bf n})}_{R,j}$ are given locally around $z_n=z_0$ 
by
\bq
 \hat{\psi}^{({\bf n})}_{L,j}
 & = &
 \hat{\varphi}^{({\bf n}),(-1)}_{L,i}
 \left( \Omega^{({\bf n}),(-1)} \right)^{-1}_{ij}
 + {\mathcal O}\left(z_n-z_0\right),
 \nonumber \\
 \hat{\psi}^{({\bf n})}_{R,j}
 & = &
 -
 \left( \Omega^{({\bf n}),(-1)} \right)^{-1}_{jk}
 \hat{\varphi}^{({\bf n}),(-1)}_{R,k}
 + {\mathcal O}\left(z_n-z_0\right),
\eq
where the superscript $({\bf n}),(-1)$ denotes the residue in an expansion around
$z_n=z_0$.
Only the constant part of $\hat{\psi}^{({\bf n})}_{L,j}$ and $\hat{\psi}^{({\bf n})}_{R,j}$
with respect to the variable $z_n$ is relevant to eq.~(\ref{to_be_shown}).
We may replace $\hat{\psi}^{({\bf n})}_{L,j}$ and $\hat{\psi}^{({\bf n})}_{R,j}$ by
\bq
 \hat{\psi}^{({\bf n})}_{L,j}
 \; \rightarrow \;
 \hat{\varphi}^{({\bf n})}_{L,i}
 \left( \Omega^{({\bf n})} \right)^{-1}_{ij},
 \nonumber \\
 \hat{\psi}^{({\bf n})}_{R,j}
 \; \rightarrow  \;
 -
 \left( \Omega^{({\bf n})} \right)^{-1}_{jk}
 \hat{\varphi}^{({\bf n})}_{R,k}
\eq
and eq.~(\ref{to_be_shown}) becomes (with $\det \Omega^{({\bf n})} =P/Q$)
\bq
 \left\langle \varphi_L \right. \left| \varphi_R \right\rangle
 & = &
 \sum\limits_{z_0 \in  {\mathcal S}_{\bf n}} 
 \sum\limits_{i,j=1}^{\nu_{\bf n-1}}
 \mathop{\mathrm{res}}_{z_n=z_0}
 \left(
  \hat{\varphi}^{({\bf n})}_{L,i}
 \left( \Omega^{({\bf n})} \right)^{-1}_{ij}
 \hat{\varphi}^{({\bf n})}_{R,j} \; dz_n \right) 
 \nonumber \\
 & = &
 \sum\limits_{z_0 \in  {\mathcal S}_{\bf n}} 
 \sum\limits_{i,j=1}^{\nu_{\bf n-1}}
 \mathop{\mathrm{res}}_{z_n=z_0}
 \left(
  \frac{Q}{P}
  \hat{\varphi}^{({\bf n})}_{L,i}
 \left( \mathrm{adj} \; \Omega^{({\bf n})}  \right)_{ij}
 \hat{\varphi}^{({\bf n})}_{R,j} dz_n \right) 
 \nonumber \\
 & = &
 \frac{1}{2\pi i}
 \int\limits_{{\mathcal C}}
  dz_n
 \sum\limits_{i,j=1}^{\nu_{\bf n-1}}
  \frac{Q \hat{\varphi}^{({\bf n})}_{L,i} \left( \mathrm{adj} \; \Omega^{({\bf n})}  \right)_{ij} \hat{\varphi}^{({\bf n})}_{R,j}}{P},
\eq
where the contour ${\mathcal C}$ consists of small counter-clockwise circles around all singular points $z_0 \in {\mathcal S}_n$.
We may deform this contour such that the contour goes from a singular point to infinity, comes
back from infinity to half-encircle the next singular point counter-clockwise, goes back to infinity etc..
This contour encloses all critical points $P=0$ clockwise.
\begin{figure}
\begin{center}
\includegraphics[scale=1.0]{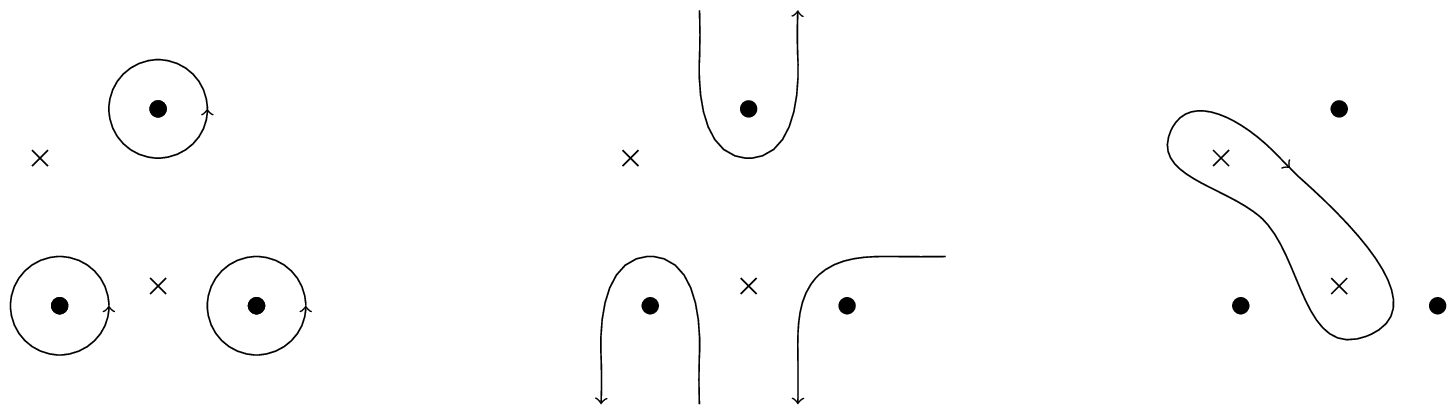}
\end{center}
\caption{
The deformation of the contour.
Singular points are drawn with a dot, critical points by a cross.
Left: The original integration contour encircles all singular points anti-clockwise.
Middle: The integration contour is deformed to infinity.
Right: The integration contour is further deformed to enclose all critical points clockwise.
}
\label{fig_deformation}
\end{figure}
This is shown in fig.~\ref{fig_deformation}.
Localising the integral on $P=0$ gives eq.~(\ref{global_residue_vector_case}), including the minus sign due to the clockwise
orientation.
This completes the proof.
\end{proof}

Eq.~(\ref{global_residue_vector_case}) is the main result of this paper.
The right-hand side is again a global residue (in one variable $z_n$) and can be computed without
introducing algebraic extensions.


\section{Computation of the global residue}
\label{sect:global_residue}

In this section we review how to compute a global residue in one variable without introducing algebraic
extensions.
The method is an adoption of ref.~\cite{Cattani:2005,Sogaard:2015dba} to the univariate case,
for the underlying mathematics we refer to ref.~\cite{Griffiths:book}.

Let $P \in {\mathbb K}[z]$ be a polynomial in $z$ and let $f(z)$ be a rational function in $z$ with coefficients in ${\mathbb K}$.
We write
\bq
 f \; = \; \frac{P_f}{Q_f},
 \;\;\;\;\;\;\;\;\;
 P_f, Q_f \; \in \; {\mathbb K}\left[z\right],
 \;\;\;\;\;\;\;\;\;
 \gcd\left(P_f, Q_f\right) \; = \; 1.
\eq
We would like to compute the global residue
\bq
 \mathrm{res}_{\langle P \rangle}\left( f \right).
\eq
We denote the ideal generated by $P$ by $I=\langle P \rangle$.
We may assume that $P$ and $Q_f$ have no common zero, e.g. $f$ is not singular on the critical points
$V(I)$.
By Hilbert's Nullstellensatz there exist polynomials $\tilde{P}$ and $\tilde{Q}_f$ in ${\mathbb K}[z]$ such that
\bq
\label{Nullstellensatz}
 \tilde{P} P + \tilde{Q}_f Q_f & = & 1.
\eq
$\tilde{Q}_f$ is called the polynomial inverse of $Q_f$ with respect to the ideal $I$.
The polynomials $\tilde{P}$ and $\tilde{Q}_f$ can be computed with the extended Euclidean algorithm.

With the polynomial inverse at hand 
we have
\bq
\label{eq_polynomial_inverse}
 \mathrm{res}_{\langle P \rangle}\left( f \right)
 & = &
 \mathrm{res}_{\langle P \rangle}\left( P_f \tilde{Q}_f \right).
\eq
Eq.~(\ref{eq_polynomial_inverse}) allows us to replace a calculation with rational functions by a calculation with polynomials.

Let us now consider the vector space
\bq
 {\mathbb K}\left[z\right] / I.
\eq
This vector space has dimension $\nu=\deg P$.
A monomial basis for this vector space is given by
\bq
 v_j \; = \; z^{j-1},
 \;\;\;\;\;\;\;\;\;
 1 \; \le \; j \; \le \; \nu.
\eq
By polynomial division with remainder we may write
\bq
 P_f \tilde{Q}_f
 & = &
 \sum\limits_{j=1}^\nu
 a_j v_j
 \mod \; I.
\eq
The global residue defines a non-degenerate symmetric inner product on ${\mathbb K}[z] / I$:
\bq
 \left( P_1, P_2 \right)
 & = &
 \mathrm{res}_{\langle P \rangle}\left( P_1 \cdot P_2 \right),
 \;\;\;\;\;\;\;\;\;
 P_1, P_2 \; \in \; {\mathbb K}\left[z\right] / I.
\eq
Let $w_j$ be the dual basis to $v_j$ with respect to this inner product, e.g.
\bq
 \left( v_i, w_j \right)
 & = &
 \delta_{i j}.
\eq
We write $1$ as a linear combination of the dual basis
\bq
 1 & = &
 \sum\limits_{j=1}^\nu
 b_j w_j
 \mod \; I.
\eq
Then
\bq
 \mathrm{res}_{\langle P \rangle}\left( f \right)
 \; = \;
 \mathrm{res}_{\langle P \rangle}\left( P_f \tilde{Q}_f \right)
 \; = \;
 \sum\limits_{j=1}^\nu
 a_j b_j.
\eq
Thus it remains to give an algorithm for the computation of the dual basis $w_j$.
This can be done with the Bezoutian matrix, which in our case is just a $1 \times 1$-matrix.
We define
\bq
 B\left(z,y\right)
 & = &
 \frac{P\left(z\right)-P\left(y\right)}{z-y}.
\eq
$B(z,y)$ is a polynomial of degree $(\nu-1)$ in $z$ and $y$.
One expands $B(z,y)$ in $y$. The coefficient of $y^{j-1}$ is a polynomial $w_j(z)$ in $z$
and defines the dual basis $w_j$.
For the case at hand this can be done once and for all:
If 
\bq
 P & = & \sum\limits_{j=0}^\nu c_j z^j,
 \;\;\;\;\;\;\;\;\;
 c_j \; \in \; {\mathbb K},
\eq
then
\bq
 w_j
 & = &
 \sum\limits_{k=0}^{\nu-j} c_{k+j} z^k.
\eq
In particular
\bq
 w_{\nu} & = & c_\nu
\eq
and the global residue reduces to
\bq
 \mathrm{res}_{\langle P \rangle}\left( f \right)
 & = &
 \frac{a_\nu}{c_\nu},
\eq
where $a_\nu$ is the coefficient of $z^{\nu-1}$ in the reduction of $P_f \tilde{Q}_f$ modulus $P$.

Let us summarise:
\begin{proposition}
\label{proposition_8}
Let $P, P_f, Q_f \in {\mathbb K}[z]$ with $\gcd(P, Q_f) \; = \; \gcd(P_f, Q_f) \; = \; 1$.
Let $\nu=\deg P$ and let
$\tilde{Q}_f$ be the polynomial inverse of $Q_f$ with respect to the ideal $\langle P \rangle$.
Then
\bq
 \mathrm{res}_{\langle P \rangle}\left( \frac{P_f}{Q_f} \right)
 & = &
 \frac{a_\nu}{c_\nu},
\eq
where $a_\nu$ is the coefficient of $z^{\nu-1}$ in the reduction of $P_f \tilde{Q}_f$ modulus $P$
and $c_\nu$ is the coefficient of $z^{\nu}$ of $P$.
\end{proposition}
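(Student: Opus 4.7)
The plan is to assemble the ingredients already developed in this section into a short chain of identities. First, because $\gcd(P, Q_f) = 1$, the extended Euclidean algorithm yields a polynomial inverse $\tilde Q_f$ satisfying $\tilde P P + \tilde Q_f Q_f = 1$, which via eq.~(\ref{eq_polynomial_inverse}) reduces the problem to evaluating $\mathrm{res}_{\langle P\rangle}(P_f \tilde{Q}_f)$. Polynomial division with remainder then expresses this product modulo $P$ in the monomial basis $v_j = z^{j-1}$ of the $\nu$-dimensional quotient ${\mathbb K}[z]/\langle P\rangle$, producing coefficients $a_j$ of which only $a_\nu$ will ultimately be needed.

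Second, I would invoke the non-degenerate symmetric residue pairing $(P_1,P_2) = \mathrm{res}_{\langle P\rangle}(P_1 P_2)$ together with the Bezoutian identification $w_j(z) = \sum_{k=0}^{\nu-j} c_{k+j}\, z^k$ of the dual basis to $v_j$, both stated in the body of the section. The decisive observation is that $w_\nu = c_\nu$ is a nonzero constant, so the defining duality $(v_j, w_\nu) = \delta_{j\nu}$ collapses to $c_\nu\,\mathrm{res}_{\langle P\rangle}(z^{j-1}) = \delta_{j\nu}$, that is
\bq
 \mathrm{res}_{\langle P\rangle}\left(v_j\right) & = & \frac{\delta_{j\nu}}{c_\nu}.
\eq

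Finally, linearity of the global residue combined with eq.~(\ref{eq_polynomial_inverse}) yields
\bq
 \mathrm{res}_{\langle P\rangle}\left(\frac{P_f}{Q_f}\right)
 & = & \mathrm{res}_{\langle P\rangle}\left(P_f \tilde{Q}_f\right)
 \;\; = \;\; \sum_{j=1}^{\nu} a_j\,\mathrm{res}_{\langle P\rangle}\left(v_j\right)
 \;\; = \;\; \frac{a_\nu}{c_\nu},
\eq
which is the claimed formula. The main obstacle is really just the Bezoutian identification of the dual basis, which the paper takes as given; a fully self-contained write-up could instead compute the monomial residues $\mathrm{res}_{\langle P\rangle}(z^{j-1})$ directly via the global residue theorem on $\mathbb{P}^1$, observing that $z^{j-1}/P(z)$ has residue $-\delta_{j\nu}/c_\nu$ at infinity for $1 \le j \le \nu$. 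Either route makes the final step a one-line bookkeeping computation, and once the Bezoutian duality is accepted the proof is essentially trivial.
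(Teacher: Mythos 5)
Your proof is correct and follows the same route the paper intends: it assembles the identities from Section~\ref{sect:global_residue} (polynomial inverse via Hilbert's Nullstellensatz, reduction modulo $\langle P\rangle$ in the monomial basis, Bezoutian dual basis with $w_\nu = c_\nu$) into the claimed formula, which is exactly what the paper's one-line proof delegates to the surrounding text and to the references. Your streamlined step $\mathrm{res}_{\langle P\rangle}(v_j) = \delta_{j\nu}/c_\nu$ is a minor cosmetic simplification of the paper's phrasing via the coefficients $b_j$ of $1$ in the dual basis, and your remark that one could bypass the Bezoutian altogether by computing $\mathrm{res}_{z=\infty}(z^{j-1}/P) = -\delta_{j\nu}/c_\nu$ and invoking the residue theorem on $\mathbb{P}^1$ is a genuinely more elementary and self-contained route, worth keeping as it makes the univariate special case transparent without appeal to the general global-residue machinery.
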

\begin{proof}
See for example \cite{Cattani:2005,Sogaard:2015dba}.
The main ideas have been outlined above.
\end{proof}


\section{Bases for the twisted cohomology groups}
\label{sect:bases}

Within the recursion we need bases for the twisted cohomology groups $H^{({\bf i})}_\omega$
for $0 \le i \le n-1$.
The dimension of the twisted cohomology groups is given by the number of critical points of 
$\omega^{({\bf i})}$ \cite{Lee:2013hzt,Frellesvig:2019uqt}
\bq
 \dim H^{({\bf i})}_\omega & = &
 \# \; \mbox{solutions of } \omega^{({\bf i})} = 0 \;\; \mbox{on} \;\; {\mathbb C}^n - D.
\eq
Usually it is not an issue to find a basis.
For completeness, we give here a systematic algorithm to construct a basis for $H^{({\bf i})}_\omega$
for the case where all critical points are proper and non-degenerate
(although it involves the computation of a multivariate Gr\"obner basis).
We write
\bq
 \omega^{({\bf i})}
 & = &
 \sum\limits_{j=1}^i \omega_j dz_j,
 \;\;\;\;\;\;
 \omega_j \; = \; \frac{P_j}{Q_j},
 \;\;\;\;\;\;
 P_j, Q_j \; \in \; \tilde{\mathbb K}\left[z_1,\dots,z_i\right],
 \;\;\;\;\;\;
 \gcd\left(P_j,Q_j\right) \; = \; 1,
\eq
with $\tilde{\mathbb K} = {\mathbb K}(z_{i+1},\dots,z_n)$.
We consider the ideal
\bq
 I_i \; = \; \left\langle P_1, \dots, P_i \right\rangle
 \; \subset \; 
 \tilde{\mathbb K}\left[z_1,\dots,z_i\right].
\eq
In the case where all critical points are proper and non-degenerate we have
\bq
 \dim H^{({\bf i})}_\omega 
 & = &
 \dim\left( \tilde{\mathbb K}\left[z_1,\dots,z_i\right] / I_i \right).
\eq
Let $G_1, \dots, G_r$ be a Gr\"obner basis of $I_i$ with respect to some term order $<$:
\bq
 I_i & = &
 \left\langle G_1, \dots,G_r \right\rangle.
\eq
A basis for $H^{({\bf i})}_\omega$ is given by all monomials
\bq
 \prod\limits_{k=1}^i z_k^{\nu_k},
 \;\;\;\;\;\;\;\;\;
 \nu_k \; \in \; {\mathbb N}_0
\eq
which are not divisible by any leading term of the Gr\"obner basis:
\bq
 \mathrm{lt}\left(G_j\right)
 & \not | &
 \prod\limits_{k=1}^i z_k^{\nu_k}
 \;\;\;\;\;\;\;\;\; \forall \; 0 \; \le \, j \; \le \; r.
\eq
Here, $\mathrm{lt}$ denotes the leading term of a polynomial with respect to the chosen term order.


\section{The algorithm}
\label{sect:algorithm}

We may now summarise the algorithm for the computation of intersection numbers for twisted cocycles:
\begin{itemize}
\item[] \textbf{Input:} A cohomology class $\langle \varphi_L | \in H^{({\bf n})}_\omega$,
a dual cohomology class $| \varphi_R \rangle \in (H^{({\bf n})}_\omega)^\ast$
and a list of bases $\langle e^{({\bf i})}_j |$ of $H^{({\bf i})}_\omega$
for $0 \le i \le (n-1)$.
\item[] \textbf{Output:} The intersection number $\langle \varphi_L | \varphi_R \rangle$.
\end{itemize}
\begin{itemize}
\item[1)] Recursion stop: If $n=0$ return $\varphi_L \varphi_R$.
\item[2)] Computations with $(n-1)$ variables:
Compute the dual basis $| d^{({\bf n-1})}_j \rangle$ of $(H^{({\bf n-1})}_\omega)^\ast$
and the matrix $\Omega^{({\bf n})}$.
Expand $\langle \varphi_L |$ in the basis $\langle e^{({\bf n-1})}_j |$ of $H^{({\bf n-1})}_\omega$
\bq
 \left\langle \varphi_L \right|
 & = &
 \sum\limits_{j=1}^{\nu_{\bf n-1}}
 \left\langle \varphi_{L,j} \right| \wedge \left\langle e^{({\bf n-1})}_j \right|
\eq
and expand $| \varphi_R \rangle$ in the dual basis $| d^{({\bf n-1})}_j \rangle$ of $(H^{({\bf n-1})}_\omega)^\ast$
\bq
 \left| \varphi_R \right\rangle
 & = &
 \sum\limits_{j=1}^{\nu_{\bf n-1}}
 \left| d^{({\bf n-1})}_j \right\rangle
 \wedge
 \left| \varphi_{R,j} \right\rangle.
\eq
\item[3)] Reduction to simple poles:
Reduce the coefficient vector $\varphi_{L,j}$ to an equivalent vector $\varphi_{L,j}'$ with only simple poles in the variable $z_n$.
Similarly, reduce the coefficient vector $\varphi_{R,j}$ to an equivalent vector $\varphi_{R,j}'$ with only simple poles in the variable $z_n$.
\item[4)] Global residue:
Define the polynomials $P,Q \in {\mathbb K}\left[z_n\right]$ by
\bq
 \det\left(\Omega^{({\bf n})}\right) & = & \frac{P}{Q},
 \;\;\;\;\;\;\;\;\;
 \gcd\left(P,Q\right) \; = \; 1.
\eq
Return the univariate global residue
\bq
 \left\langle \varphi_L \right| \left. \varphi_R \right\rangle
 & = &
 - \mathrm{res}_{\langle P \rangle}\left( Q \; \hat{\varphi}_{L,i}' \left( \mathrm{adj} \; \Omega^{({\bf n})} \right)_{ i j} \hat{\varphi}_{R,j}' \right).
\eq
\end{itemize}


\section{Examples}
\label{sect:examples}

\subsection{An univariate example}

We start with a univariate example ($n=1$). Let
\bq
 p_1 \; = \; z_1,
 & &
 p_2 \; = \; z_1^6 + z_1^5 + z_1^4 + z_1^3 + z_1^2 + z_1 + 1.
\eq
$p_2$ is the $7$-th cyclotomic polynomial with roots $\exp(2\pi i j/7)$, where $j \in \{1,\dots,6\}$.
Set
\bq
 u
 & = &
 \left( p_1 p_2 \right)^\gamma.
\eq
The differential one form $\omega$ is then given by
\bq
 \omega
 & = &
 \gamma \frac{7 z_1^6 + 6 z_1^5 + 5 z_1^4 + 4 z_1^3 + 3 z_1^2 + 2 z_1 + 1}{p_1 p_2} dz_1.
\eq
A basis $\hat{e}^{(1)}_j$ for $H^1_\omega$ is given by
\bq
 \left( 1, z_1, z_1^2, z_1^3, z_1^4, z_1^5 \right).
\eq
Let us now consider
\bq
 \varphi_L \; = \; \frac{dz_1}{z_1^2},
 & &
 \varphi_R \; = \; dz_1.
\eq
$\varphi_L$ has a double pole at $z_1=0$, $\varphi_R$ has a double pole at $z_1=\infty$.
We have
\begin{align}
 \varphi_L
 & = \; 
 \varphi_L'
 + \nabla_\omega \left( - \frac{1}{\left(1-\gamma\right) z_1} \right),
 & 
 \varphi_L'
 & = \;
 \frac{\gamma}{\left(1-\gamma\right)} \frac{\left(6z_1^5+5z_1^4+4z_1^3+3z_1^2+2z_1+1\right)}{p_1 p_2} dz_1,
 \nonumber \\
 \varphi_R
 & = \; 
 \varphi_R' + \nabla_{-\omega} \left( \frac{z_1 }{1-7\gamma} \right),
 & 
 \varphi_R'
 & = \; 
 - \frac{\gamma}{\left(1-7\gamma\right)} \frac{\left(z_1^5+2z_1^4+3z_1^3+4z_1^2+5z_1+6\right)}{p_2} dz_1.
\end{align}
$\varphi_L'$ and $\varphi_R'$ have only simple poles.
Thus
\bq
 \left\langle \varphi_L \right| \left. \varphi_R \right\rangle
 \; = \;
 \left\langle \varphi_L' \right| \left. \varphi_R' \right\rangle
 & = & 
 \frac{6 \gamma}{\left(1-\gamma\right)\left(1-7\gamma\right)}.
\eq
The results from the algorithm presented here and the algorithm of \cite{Mizera:2019gea,Frellesvig:2019uqt} agree.
However the algorithm presented here does not require the introduction of an algebraic extension 
(in this case the root $r_7=\exp(2\pi i/7)$) in intermediate steps of the calculation.

\subsection{An example with an elliptic curve}
\label{sect:example_elliptic_curve_1}

Let us now consider two variables ($n=2$). 
Let
\bq
 p_1 \; = \; z_1,
 \;\;\;
 p_2 \; = \; z_2,
 \;\;\;
 p_3 \; = \; z_2^2 - 4 z_1^3 + 11 z_1 - 7.
\eq
The cubic equation $4 z_1^3 - 11 z_1 + 7 = 0$ has the roots $z_1^{(0)}=1$, $z_1^{(\pm)}=-1/2\pm \sqrt{2}$.
We set
\bq
 u
 & = &
 \left( p_1 p_2 p_3 \right)^\gamma.
\eq
The differential one-form $\omega$ reads
\bq
 \omega
 & = &
 \gamma \frac{z_2^2 - 16 z_1^3 + 22 z_1 - 7}{p_1 p_3} dz_1
 +
 \gamma \frac{3 z_2^2 - 4 z_1^3 + 11 z_1 - 7}{p_2 p_3} dz_2.
\eq
A basis $\hat{e}^{(2)}_j$ for $H^2_\omega$ is given by
\bq
 \left( 1, z_1, z_2, z_1 z_2 , z_1^2, z_1^2 z_2 \right).
\eq
Let us now consider
\bq
 \varphi_L \; = \; \frac{1}{p_3} dz_2 \wedge dz_1,
 & &
 \varphi_R \; = \; \frac{z_1}{p_3} dz_1 \wedge dz_2.
\eq
We have
\bq
\label{result_example_2}
 \left\langle \varphi_L \right| \left. \varphi_R \right\rangle
 & = &
 \frac{1}{4 \left(1-11\gamma\right) \gamma}.
\eq
With the algorithm presented here, the intersection number is computed
without introducing any algebraic extensions.
We have verified the result with the algorithm of ref.~\cite{Mizera:2019gea,Frellesvig:2019uqt}, which
introduces in intermediate stages algebraic extensions.
In addition, it is advantageous to use for the algorithm of ref.~\cite{Mizera:2019gea,Frellesvig:2019uqt}
the order $(z_2,z_1)$ instead of $(z_1,z_2)$. 
With the order $(z_2,z_1)$ the algorithm of ref.~\cite{Mizera:2019gea,Frellesvig:2019uqt} requires only
the roots $z_1^{(\pm)}$, with the order $(z_1,z_2)$ one would need cubic roots.

With the algorithm presented here, the intersection number can be computed in any order.
Of course, the order may influence the performance.
The dimensions of the ``inner'' cohomology groups $H^{({\bf i})}_\omega$ depends on the chosen order.
As a general rule, it is advantageous to choose an order such that the dimensions
of the ``inner'' cohomology groups are minimised.
In this sense the order $(z_2,z_1)$ is preferred over the order $(z_1,z_2)$, since
\bq
 \mbox{Basis} \; \hat{e}^{(z_2)}_j \; \mbox{of} \; H^{(z_2)}_\omega 
 \; : \;
 \left( 1, z_2 \right),
 & &
 \mbox{basis} \; \hat{e}^{(z_1)}_j \; \mbox{of} \; H^{(z_1)}_\omega 
 \; : \;
 \left( 1, z_1, z_1^2 \right).
\eq
Here we used the notation that $H^{(z_{1/2})}_\omega$ denotes $H^{({\bf 1})}_\omega$ with the
order $(z_1,z_2)$ or $(z_2,z_2)$, respectively.
Analogously, we denote the corresponding connection matrices $\Omega^{({\bf 2})}$
by $\Omega^{(z_1,z_2)}$ for the order $(z_1,z_2)$ and
by $\Omega^{(z_2,z_1)}$ for the order $(z_2,z_1)$.

For the order $(z_1,z_2)$ we have that $\Omega^{(z_1,z_2)}$ is a $3 \times 3$-matrix. The determinant is given by
\bq
\lefteqn{
 \det\left( \Omega^{(z_1,z_2)} \right)
 = } & &
 \\
 & &
 \frac{ \left(2+11\gamma\right)\left(4+11\gamma\right)\left(6+11\gamma\right)z_2^6 
        -231 \gamma \left(33 \gamma^2 + 24 \gamma + 4 \right) z_2^4
        + 2 \gamma^2 \left( 3949 \gamma + 1315 \right) z_2^2
        + 56 \gamma^3
      }{z_2^3 \left(z_2^2-7\right) \left(27 z_2^4 - 378 z_2^2 - 8\right)}.
 \nonumber 
\eq
For the order $(z_2,z_1)$ we have that $\Omega^{(z_2,z_1)}$ is a $2 \times 2$-matrix. The determinant is given by
\bq
\lefteqn{
 \det\left( \Omega^{(z_2,z_1)} \right)
 = } & &
 \\
 & &
 \frac{\left[4\left(3+11\gamma\right)z_1^3 - 11 \left(1+5\gamma\right)z_1 + 14 \gamma \right]
       \left[4\left(6+11\gamma\right)z_1^3 - 11 \left(2+5\gamma\right)z_1 + 14 \gamma \right]}
      {4 z_1^2 \left(z_1-1\right)^2 \left(4 z_1^2 + 4 z_1 -7 \right)^2}.
 \nonumber
\eq
In both cases, the numerator of the determinant is a degree six polynomial 
in the remaining integration variable.
In both cases the determinant has six critical points (defined by the vanishing of the determinant).
This is consistent with
\bq
 \dim H^2_\omega  & = & 6.
\eq
On the other hand, the number of distinct singular points of the determinant 
(defined by the vanishing of the denominator of the determinant)
is given by (not counting multiplicities)
\bq
 \left| {\mathcal S}_{(z_1,z_2)} \right|
 \; = \; 7,
 & &
 \left| {\mathcal S}_{(z_2,z_1)} \right|
 \; = \; 4,
\eq
and has no particular meaning.

Let us give some details on the calculation of the intersection number in eq.~(\ref{result_example_2}).
We choose the order $(z_2,z_1)$, i.e. we integrate out $z_1$ first.
As already noted above, a basis for $H^{(z_2)}_\omega$ is given by 
\bq
 \hat{e}^{(z_2)}_1 \; = \; 1,
 & &
 \hat{e}^{(z_2)}_2 \; = \; z_2.
\eq
We set $\hat{h}^{(z_2)}_1=1$, $\hat{h}^{(z_2)}_2=z_2$, compute the $2 \times 2$-intersection matrix 
and construct the dual basis according to eq.~(\ref{construction_dual_basis}).
The dual basis of $(H^{(z_2)}_\omega)^\ast$ is then
\bq
 \hat{d}^{(z_2)}_1 \; = \; \frac{\left(1-3\gamma\right)\left(1+3\gamma\right)}{2\gamma\left(4z_1^3-11z_1+7\right)},
 & &
 \hat{d}^{(z_2)}_2 \; = \; \frac{3\left(2-3\gamma\right)\left(2+3\gamma\right)}{2\gamma\left(4z_1^3-11z_1+7\right)^2} z_2.
\eq
In this example, $\Omega^{(z_2,z_1)}$ is a diagonal matrix given by
\bq
 \Omega^{(z_2,z_1)}
 & = &
 \left( \begin{array}{cc}
 \frac{12z_1^2-11}{2\left(4z_1^3-11z_1+7\right)} + \frac{44z_1^3-55z_1+14}{2 z_1 \left(4z_1^3-11z_1+7\right)} \gamma & 0 \\
 0 & \frac{12z_1^2-11}{\left(4z_1^3-11z_1+7\right)} + \frac{44z_1^3-55z_1+14}{2 z_1 \left(4z_1^3-11z_1+7\right)} \gamma \\
 \end{array} \right).
 \;\;\;
\eq
We expand $\varphi_L$ in the basis of $H^{(z_2)}_\omega$
\bq
 \varphi_L
 & = & 
 \left( \hat{\varphi}^{(z_2,z_1)}_{L,1} dz_1 \right) \wedge \left( \hat{e}^{(z_2)}_1 dz_2 \right),
 \;\;\;\;\;\;
 \hat{\varphi}^{(z_2,z_1)}_{L,1} \; = \;
 \frac{1+3\gamma}{2\gamma\left(4z_1^3-11z_1+7\right)}.
\eq
The coefficient $\hat{\varphi}^{(z_2,z_1)}_{L,1}$ has already only simple poles in $z_1$, so no additional reduction to simple poles is required.
We expand $\varphi_R$ in the basis of $(H^{(z_2)}_\omega)^\ast$
\bq
 \varphi_R
 & = & 
 \left( \hat{d}^{(z_2)}_1 dz_2 \right) \wedge \left( \hat{\varphi}^{(z_2,z_1)}_{R,1} dz_1 \right),
 \;\;\;\;\;\;
 \hat{\varphi}^{(z_2,z_1)}_{R,1} \; = \;
 - \frac{z_1}{\left(1+3\gamma\right)}.
\eq
The coefficient $\hat{\varphi}^{(z_2,z_1)}_{R,1}$ has a triple pole at infinity. The reduction to simple poles gives
\bq
 \varphi_R'
 & = & 
 \left( \hat{d}^{(z_2)}_1 dz_2 \right) \wedge \left( \hat{\varphi}^{(z_2,z_1)}_{R,1}{}' dz_1 \right),
 \;\;\;\;\;\;
 \hat{\varphi}^{(z_2,z_1)}_{R,1}{}' \; = \;
 \frac{z_1\left(21-22z_1\right)}{\left(1-11\gamma\right)\left(4z_1^3-11z_1+7\right)}.
\eq
The intersection number is then obtained as
\bq
 \left\langle \varphi_L \right| \left. \varphi_R \right\rangle
 & = &
 \left\langle \hat{\varphi}^{(z_2,z_1)}_{L,1} dz_1 \right| \left. \hat{\varphi}^{(z_2,z_1)}_{R,1}{}' dz_1 \right\rangle
 \; = \;
 \frac{1}{4 \left(1-11\gamma\right) \gamma}.
\eq

\subsection{Third example}

As a third example we discuss an example already discussed in \cite{Matsubara-Heo:2019}.
We set
\bq
 p_1 \; = \; z_1,
 \;\;\;
 p_2 \; = \; z_2,
 \;\;\;
 p_3 \; = \; z_1^2 z_2 + z_1 z_2^2 + z_1 + a_4 z_1 z_2 + a_5 z_2,
\eq
where $a_4$ and $a_5$ are two parameters.
$p_3$ is again a genus $1$ curve.
We set
\bq
 u & = &
 p_1^{\frac{1}{2}+\eps}
 p_2^{\frac{1}{2}+\eps}
 p_3^{-\frac{1}{2}}.
\eq
In this case we have $\dim H^2_\omega = 4$ and 
a basis $\hat{e}^{(2)}_j$ for $H^2_\omega$ is given by
\bq
 \left( 
  \frac{1}{z_1 z_2}, 
  \frac{1}{z_1 z_2} \frac{\partial \ln u}{\partial a_5}, 
  \frac{1}{z_1 z_2} \frac{\partial \ln u}{\partial a_4}, 
  \frac{1}{z_1 z_2 u} \frac{\partial^2 u}{\partial a_5^2} \right).
\eq
Let us now consider
\bq
 \varphi_L \; = \; \frac{1}{z_1 z_2} dz_2 \wedge dz_1,
 & &
 \varphi_R \; = \; \frac{1}{z_1 z_2} dz_1 \wedge dz_2.
\eq
We have
\bq
 \left\langle \varphi_L \right| \left. \varphi_R \right\rangle
 & = &
 \frac{32}{1 - 16 \eps^2},
\eq
in agreement with ref.~\cite{Matsubara-Heo:2019}.


\section{Applications}
\label{sect:applications}

In this section we discuss applications towards Feynman integrals.
We show how information on the system of differential equations for a family of Feynman integrals 
may be obtained from intersection numbers.
The formalism has already been discussed in \cite{Mastrolia:2018uzb,Frellesvig:2019kgj,Frellesvig:2019uqt}.
We may either use the Baikov representation \cite{Baikov:1996iu,Lee:2009dh} or the Lee-Pomeransky \cite{Lee:2013hzt} representation of Feynman integrals.
For concreteness, we focus here on the Baikov representation.
As a pedagogical example we choose the massive sunrise integral.
On the one hand, this example shows that the method works in the multivariate case for higher degree polynomials beyond multiple polylogarithms.
On the other hand with our algorithm at hand 
we are able to clarify a subtlety in the equal mass case first noticed in \cite{Frellesvig:2019kgj}.
We will start by discussing the unequal mass case.
Specialising in section~\ref{sect:equal_mass_sunrise} to the equal mass allows us 
to demonstrate that assumption (\extraassumption) in section~\ref{sect:notation} is required.
We note that the massive sunrise integral in the Lee-Pomeransky representation has been considered in \cite{Mizera:2019vvs}.

In section~\ref{sect:feynman_integral_reduction} we consider Feynman integral reduction. We discuss a non-planar two-loop example relevant to Higgs decay.

\subsection{The Baikov representation}

Our starting point is a $l$-loop $n$-point Feynman integral
\bq
 I_{\nu_1 \dots \nu_{\NV}} & = &
 e^{l \eps \Eulerconstant}
 \left(\mu^2\right)^{\nu-\frac{l D}{2}}
 \int \prod\limits_{r=1}^l \frac{d^Dk_r}{i \pi^{\frac{D}{2}}}
 \prod\limits_{s=1}^{\NV}
 \frac{1}{\left( -q_s^2 + m_s^2 \right)^{\nu_s}},
 \;\;\;\;\;\;
 \nu \; = \; \sum\limits_{s=1}^{\NV} \nu_s,
 \;\;\;\;\;\;
 \nu_s \; \in \; {\mathbb Z}.
\eq
$\Eulerconstant$ is Euler's constant.
Let $p_1, p_2, ..., p_r$ denote the external momenta
and denote by
\bq
 e & = &
 \dim \left\langle p_1, p_2, ..., p_r \right\rangle
\eq
the dimension of the span of the external momenta.
For generic external momenta and $D \ge r-1$ we have $e=r-1$.
We set
\bq
 \NV & = &
 \frac{1}{2} l \left(l+1\right) + e l.
\eq
$\NV$ gives the number of linear independent scalar products involving the loop momenta.
We denote these scalar products by
\bq
 \sigma & = &
 \left( \sigma_1, ..., \sigma_{\NV} \right)
 \; = \; 
 \left( -k_1 \cdot k_1, -k_1 \cdot k_2, ..., -k_{l-1} \cdot k_l, -k_1 \cdot p_1, ..., -k_l \cdot p_e \right).
\eq
We define a $\NV \times \NV$-matrix $C$ and a $\NV$-vector $f$ by
\bq
 -q_s^2 + m_s^2 & = & C_{st} \sigma_t + f_s.
\eq
In order to arrive at the Baikov representation \cite{Baikov:1996iu,Frellesvig:2017aai,Bosma:2017ens,Harley:2017qut}
we change the integration variables to 
the Baikov variables $z_s$:
\bq
 z_s & = & -q_s^2+m_s^2.
\eq
We have
\bq
\label{sigma_to_z}
 \sigma_t
 & = &
 \left( C^{-1} \right)_{t s} \left( z_s - f_s \right).
\eq
The Baikov representation of $I$ is given by
\bq
\label{baikov_representation}
 I_{\nu_1 \dots \nu_{\NV}} & = &
 e^{l \eps \Eulerconstant}
 \left(\mu^2\right)^{\nu-\frac{l D}{2}}
 \frac{\pi^{-\frac{1}{2}\left(\NV-l\right)}}{\prod\limits_{r=1}^l \Gamma\left(\frac{D-e+1-r}{2}\right)}
 \frac{ G\left(p_1,...,p_e\right)^{\frac{-D+e+1}{2}} }{ \det C }
 \nonumber \\
 & &
 \times
 \int\limits_{\mathcal C} d^{\NV}z \;
 G\left(k_1,...,k_l,p_1,...,p_e\right)^{\frac{D-l-e-1}{2}}
 \prod\limits_{s=1}^{\NV} z_s^{-\nu_s},
\eq
where the Gram determinants are defined by
\bq
 G\left(q_1,...,q_n\right) & = &
 \det\left(-q_i \cdot q_j \right).
\eq
$G(k_1,...,k_l,p_1,...,p_e)$ expressed in the variables $z_s$'s through eq.~(\ref{sigma_to_z}) is called the Baikov polynomial:
\bq
 B\left(z_1,...,z_{\NV}\right) & = & G\left(k_1,...,k_l,p_1,...,p_e\right).
\eq
The domain of integration ${\mathcal C}$ is given by \cite{Grozin:2011mt,Mastrolia:2018uzb}
\bq
 {\mathcal C}
 & = &
 {\mathcal C}_1 \cap {\mathcal C}_2 \cap \dots \cap {\mathcal C}_l
\eq
with
\bq
 {\mathcal C}_j
 & = &
 \left\{
 \frac{G\left(k_j,k_{j+1},...,k_l,p_1,...,p_e\right)}{G\left(k_{j+1},...,k_l,p_1,...,p_e\right)} > 0
 \right\}.
\eq

\subsection{The unequal mass sunrise integral}

Let us now specialise to 
\begin{align}
 z_1 & = \; - k_2^2, &
 z_2 & = \; - \left(k_1-p\right)^2,
 & & \nonumber \\
 z_3 & = \; - k_1^2 + m_1^2, &
 z_4 & = \; - \left(k_1-k_2\right)^2 + m_2^2, &
 z_5 & = \; - \left(k_2-p\right)^2 + m_3^2.
\end{align}
This defines the Baikov variables for the sunrise integral.
\begin{figure}
\begin{center}
\includegraphics[scale=1.0]{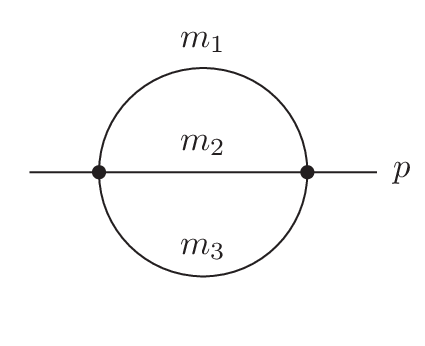}
\end{center}
\caption{
The Feynman graph corresponding to the two-loop sunrise integral.
$p$ denotes the external momentum, $m_1$,$m_2$ and $m_3$ the internal masses.
}
\label{fig_sunrise}
\end{figure}
The corresponding Feynman graph is shown in fig.~\ref{fig_sunrise}.
We have two loops ($l=2$), two external momenta ($r=2$ and $e=1$).
For the dimension of space-time we set $D=2-2\eps$.
We are here only interested in the case where $\nu_1=\nu_2=0$. To shorten the notation, we set
\bq
 S_{\nu_3 \nu_4 \nu_5} & = & I_{0 0 \nu_3 \nu_4 \nu_5}.
\eq
It is well-known that in the unequal mass case there are seven master integrals, which may be taken as
\bq
 \vec{I}
 & = &
 \left(
 S_{011},
 \;\;
 S_{101},
 \;\;
 S_{110},
 \;\;
 S_{111},
 \;\;
 S_{211},
 \;\;
 S_{121},
 \;\;
 S_{112}
 \right)^T.
\eq
We set $\mu=m_3$ and introduce the dimensionless ratios (the notation follows \cite{Bogner:2019lfa})
\bq
 x \; = \; \frac{p^2}{m_3^2}, 
 \;\;\;\;\;\;
 y_1 \; = \; \frac{m_1^2}{m_3^2}, 
 \;\;\;\;\;\;
 y_2 \; = \; \frac{m_2^2}{m_3^2}.
\eq
The derivatives of the master integrals with respect to any of the external variables $(x,y_1,y_2)$ can be expressed again
as a linear combination of the master integrals, for example
\bq
 \frac{\partial}{\partial x} \vec{I} & = & A_x \vec{I},
\eq
where $A_x$ is a $7 \times 7$-matrix.
We are interested in determining the matrix $A_x$.
Traditionally, this is done with the help of integration-by-parts identities.
The use of intersection numbers provides an alternative.
From eq.~(\ref{baikov_representation}) we have
\bq
\label{baikov_representation_sunrise}
 S_{\nu_3 \nu_4 \nu_5}
 & = &
 \frac{\left(-x\right)^\eps}{4 \pi^2 \Gamma\left(-2\eps\right)} 
 \int\limits_{\mathcal C} d^5z \;
 B^{-\eps}
 \frac{1}{z_3^{\nu_3} z_4^{\nu_4} z_5^{\nu_5} B},
\eq
with $B$ being the Baikov polynomial.

\subsubsection{The maximal cut}

In order to determine
\bq
 \left( A_x \right)_{i j},
 & & 4 \; \le \; i,j \; \le \; 7
\eq
it is sufficient to consider the maximal cut.
For the maximal cut we take the three-fold residue $z_3=z_4=z_5=0$.
We set
\bq
 p_1\left(z_1,z_2\right) \; = \; B\left(z_1,z_2,0,0,0\right),
 \;\;\;\;\;\;
 u\left(z_1,z_2\right) \; = \; p_1^{-\eps},
 \;\;\;\;\;\;
 \omega \; = \; d \ln u.
\eq
We further set
\bq 
 \omega_x \; = \; \frac{\partial \ln u}{\partial x},
 \;\;\;\;\;\;\;\;\;
 \omega_{y_1} \; = \; \frac{\partial \ln u}{\partial y_1},
 \;\;\;\;\;\;\;\;\;
 \omega_{y_2} \; = \; \frac{\partial \ln u}{\partial y_2}.
\eq
We have four critical points, consistent with four master integrals on the maximal cut ($S_{111}$, $S_{211}$, $S_{121}$, $S_{112}$).
We set
\bq
\label{def_basis_sunrise_max_cut}
 \hat{e}^{(2)}_{111} & = & \left. \frac{1}{B} \right|_{z_3=z_4=z_5=0} \; = \; \frac{1}{p_1},
 \nonumber \\
 \hat{e}^{(2)}_{211} 
 & = & 
 \left. \left( B^\eps \frac{\partial}{\partial z_3} B^{-\eps-1} \right) \right|_{z_3=z_4=z_5=0} 
 \; = \; 
 - \left(1+\eps\right) \left. \left( \frac{1}{B^2} \frac{\partial B}{\partial z_3} \right) \right|_{z_3=z_4=z_5=0},
 \nonumber \\
 \hat{e}^{(2)}_{121} 
 & = & 
 \left. \left( B^\eps \frac{\partial}{\partial z_4} B^{-\eps-1} \right) \right|_{z_3=z_4=z_5=0} 
 \; = \; 
 - \left(1+\eps\right) \left. \left( \frac{1}{B^2} \frac{\partial B}{\partial z_4} \right) \right|_{z_3=z_4=z_5=0},
 \nonumber \\
 \hat{e}^{(2)}_{112} 
 & = & 
 \left. \left( B^\eps \frac{\partial}{\partial z_5} B^{-\eps-1} \right) \right|_{z_3=z_4=z_5=0} 
 \; = \; 
 - \left(1+\eps\right) \left. \left( \frac{1}{B^2} \frac{\partial B}{\partial z_5} \right) \right|_{z_3=z_4=z_5=0}.
\eq
We denote by $\hat{d}^{(2)}_{111}$, $\hat{d}^{(2)}_{211}$, $\hat{d}^{(2)}_{121}$, $\hat{d}^{(2)}_{112}$
the dual basis.
In order to determine $(A_x)_{4,j}$ we have to consider $dS_{111}/dx$. This corresponds to
\bq
 \hat{\varphi}_L 
 & = &
 \frac{\partial}{\partial x} \hat{e}^{(2)}_{111} + \omega_x \hat{e}^{(2)}_{111} + \frac{\eps}{x} \hat{e}^{(2)}_{111},
\eq
where the last term originates from the prefactor $(-x)^\eps$ in eq.~(\ref{baikov_representation_sunrise}).
The entries $(A_x)_{4,j}$ with $4 \le j \le 7$ are then given by
\bq
 \left(A_x\right)_{4,4}
 \; = \;
 \left\langle \varphi_L | d^{(2)}_{111} \right\rangle,
 \;\;
 \left(A_x\right)_{4,5}
 \; = \;
 \left\langle \varphi_L | d^{(2)}_{211} \right\rangle,
 \;\;
 \left(A_x\right)_{4,6}
 \; = \;
 \left\langle \varphi_L | d^{(2)}_{121} \right\rangle,
 \;\;
 \left(A_x\right)_{4,7}
 \; = \;
 \left\langle \varphi_L | d^{(2)}_{112} \right\rangle,
 \nonumber
\eq
and similar for $(A_x)_{i j}$, $(A_{y_1})_{i j}$ and $(A_{y_2})_{i j}$ for $4 \le i,j \le 7$.
Computing the intersection numbers we find agreement with the known results \cite{Caffo:1998du}.

The polynomial $p_1$ is a degree $3$ polynomial in two variables $(z_1,z_2)$.
The calculation performed here gives an example, where intersection numbers can be applied to Feynman integrals
in the multivariate case and with higher degree polynomials.

\subsection{The equal mass sunrise integral}
\label{sect:equal_mass_sunrise}

Let us consider the equal mass sunrise integral
\bq
 m_1 \; = \; m_2 \; = \; m_3 \; = m \; \neq 0
\eq
and let us focus as before on the maximal cut.
We obtain the correct differential equation on the maximal cut from our results in the unequal mass case by setting 
$m_1=m_2=m_3$ in the end. However, this seems like an overkill.
The equal mass sunrise integral is a simpler Feynman integral with fewer external variables, and we are interested in methods which keep the number of variables to a minimum.

Let us investigate, what happens if we set the masses equal right from the start.
It is well-known that there are three master integrals in the equal mass case.
Due to the additional symmetry related to the masses being equal, the integrands of $S_{011}$, $S_{101}$ and $S_{110}$ integrate to the same functions,
as do the integrands of $S_{211}$, $S_{121}$ and $S_{112}$.
Within the framework of twisted cocycles we deal with integrands and the symmetry is not seen.
Phrased differently, the differential forms are not invariant under permutation of the Baikov variables $(z_3,z_4,z_5)$.
Thus the dimension of the bases will be as in the unequal mass case.

Let us now investigate the maximal cut $z_3=z_4=z_5=0$.
On the maximal cut the Baikov polynomial is given by
\bq
 p_1 & = & \frac{1}{4} \left[ \left(1-x\right)^2 - z_1 z_2 ( z_1 + z_2 + x + 3 ) \right].
\eq
As before we set $u=p_1^{-\eps}$.
There are four critical points, consistent with our expectation that $\dim H^{(\bf 2)}_\omega = 4$.
The critical points are
\bq
\label{critical_points_equal_sunrise}
 z^{(1)} & = & \left(z^{(1)}_1,z^{(1)}_2\right) \; = \; \left(0,0\right),
 \nonumber \\
 z^{(2)} & = & \left(z^{(2)}_1,z^{(2)}_2\right) \; = \; \left(0,-x-3\right),
 \nonumber \\
 z^{(3)} & = & \left(z^{(3)}_1,z^{(3)}_2\right) \; = \; \left(-x-3,0\right),
 \nonumber \\
 z^{(4)} & = & \left(z^{(4)}_1,z^{(4)}_2\right) \; = \; \left(-\frac{x}{3}-1,-\frac{x}{3}-1\right).
\eq
Thus we expect that the equal mass limit of eq.~(\ref{def_basis_sunrise_max_cut}) 
\bq
 \left(
 \hat{e}^{(2)}_{111},
 \hat{e}^{(2)}_{211},
 \hat{e}^{(2)}_{121},
 \hat{e}^{(2)}_{112}
 \right)
\eq
provides a basis $\hat{e}^{(2)}_j$ for $H^2_\omega$.
Let us now naively (i.e. without checking that all assumptions are satisfied)
apply our algorithm (or the algorithm of \cite{Mizera:2019gea,Frellesvig:2019uqt}) 
to compute the intersection matrix.
We expect the intersection matrix to have rank $4$, but we find that the intersection matrix has (erroneously) rank $3$.
This problem was already noted in \cite{Frellesvig:2019kgj}.
However in this publication master integrals (i.e. pairings between twisted cocycles and cycles)
were considered, not twisted cocycles.
After integration we should have two master integrals on the maximal cut and the additional symmetry due to equal masses
brings the number of independent Feynman integrals on the maximal cut down to two independent master integrals.
 
But let us focus on the twisted cocycles. 
A rank $3$ intersection matrix is not correct.
It is instructive to investigate what goes wrong.
We may compare step-by-step the equal mass calculation with the unequal mass calculation, setting in the latter calculation the masses equal
for each comparison.
The problem arises as follows:
The algorithm presented here and the algorithm of \cite{Mizera:2019gea,Frellesvig:2019uqt} both use an recursive approach.
Let's say we first integrate out $z_1$ and then $z_2$.
The matrices $\Omega^{(1)}$ and $\Omega^{(2)}$ are for the case at hand both $1 \times 1$-matrices.
We have
\bq
 \det \Omega^{(1)}
 & = &
 - \frac{\eps z_2 \left( 2 z_1 + x +3 + z_2\right)}{z_2 z_1^2 + z_2 \left(x+3+z_2\right) z_1 - \left(1-x\right)^2},
 \nonumber \\
 \det \Omega^{(2)}
 & = &
 \frac{\left(1-3\eps\right)z_2^3 +\left(1-4\eps\right)\left(x+3\right) z_2^2 - \eps \left(x+3\right)^2 z_2 - 2 \left(1-x\right)^2}{z_2 \left(z_2+4\right) \left[ z_2^2 + 2 \left(x+1\right) z_2 + \left(1-x\right)^2 \right]}.
\eq
In the equal mass case $\det \Omega^{(2)}$ has $3$ critical points (defined as the points $z_2 \in {\mathbb C}$ where $\det \Omega^{(2)}$ vanishes)
and $4$ singular points (defined as the points $z_2 \in {\mathbb C}$ where $\det \Omega^{(2)}$ is singular).
In the unequal mass case $\det \Omega^{(2)}$ has $4$ critical points 
and $5$ singular points.
In the equal mass limit two singular points and one critical point coincide, cancelling a common factor in the numerator and in the denominator
in $\det \Omega^{(2)}$ and leaving as a net result one singular point.
The integration contour separates singular points and critical points.
\begin{figure}
\begin{center}
\includegraphics[scale=1.0]{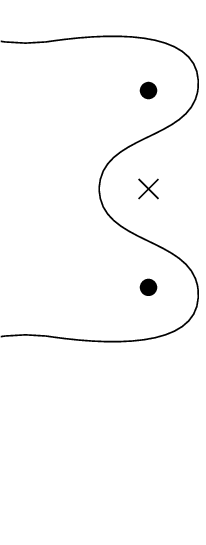}
\hspace*{20mm}
\includegraphics[scale=1.0]{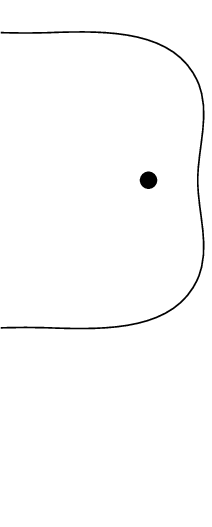}
\hspace*{20mm}
\includegraphics[scale=1.0]{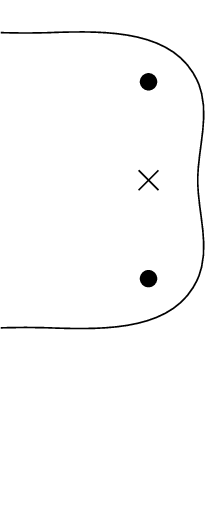}
\end{center}
\caption{
Left: The integration contour in the unequal mass case with two singular points (dots) and one critical point (cross).
Middle: The integration contour in the equal mass case with one singular points.
Right: The one singular point can be considered as the limit where two singular points and one critical point
coincide.
}
\label{fig_contour}
\end{figure}
The situation is shown in fig.~\ref{fig_contour}.
Let's assume we compute the sum of the residues of the critical points.
From fig.~\ref{fig_contour} it is clear that we miss in the equal mass case the contribution from the ``cancelled'' critical point.
This would be o.k., if the contribution from this residue would be zero.
Looking at eq.~(\ref{global_residue_vector_case}),
the two singular points provide two powers in the numerator, 
however $\hat{\varphi}_L$ and $\hat{\varphi}_r$ each are allowed to have a simple pole, cancelling 
the two powers in the numerator and leaving a non-zero residue.

Let us also discuss what happens if we perform a sum of the residues of the singular points along the lines of refs.~\cite{Mizera:2019gea,Frellesvig:2019uqt}.
In the first step we integrate out $z_1$ and sum over the residues in $z_1$ located at the two singular points defined by the
vanishing of the denominator of $\det \Omega^{(1)}$.
We do this for generic $z_2$. For the specific value $z_2=0$ we see that $\Omega^{(1)}$ vanishes and the equation~(\ref{holomorphic_solution}) will
have no solution.
At $z_2=0$ we have a singular fibre.
In the second step we integrate out $z_2$ and sum over 
the residues in $z_2$ located at the four singular points defined by the
vanishing of the denominator of $\det \Omega^{(2)}$.
One of the singular points is $z_2=0$, which a posteriori invalidates the inner integration.

Let us return to the analysis based on critical points.
We see that the assumption (\extraassumption) in section~\ref{sect:notation} is violated: $\det \Omega^{(2)}$ has in the equal mass case only
three critical points, but should have four.
This will happen for the integration order $(z_1,z_2)$ as for the integration order $(z_2,z_1)$.
We see that assumption (\extraassumption) in section~\ref{sect:notation} is a necessary condition.
This is also clear from ref.~\cite{Lee:2013hzt}:
The number of critical points corresponds to the number of independent integration cycles and by duality to the number of independent cocycles.
Having identified the problem, it is easy to find a fix: An inspection of eq.~(\ref{critical_points_equal_sunrise})
shows, that for the integration order $(z_1,z_2)$ (or the integration order $(z_2,z_1)$) two of the four original critical points in $(z_1,z_2)$-space
are in the same fibre.
A coordinate transformation
\bq
 \left( \begin{array}{c}
 z_1 \\
 z_2 \\
 \end{array} \right)
 & = &
 \left( \begin{array}{rr}
 c & s \\
 -s & c \\
 \end{array} \right)
 \left( \begin{array}{c}
 z_1' \\
 z_2' \\
 \end{array} \right)
\eq
with constants $c$ and $s$
will put them into different fibres.
It is not necessary to assume $c^2+s^2=1$, we may find a suitable $c$ and $s$ as an integer or rational number.
For the case at hand $c=1$ and $s=2$ will do the job.
In this way we don't introduce any new variables.
We have verified that after a coordinate transformation
(i) $\det \Omega^{(2)}{}'$ has four critical points, 
(ii) the intersection matrix has rank 4
and (iii) the entries of 
$(A_x)_{i j}$, for $4 \le i,j \le 7$ are computed correctly also in the case where the masses are set equal from the start.

\subsection{Feynman integral reduction}
\label{sect:feynman_integral_reduction}

Intersection numbers are also useful for Feynman integral reductions.
We present here an example, where the use of intersection numbers leads (almost) to a back-of-an-envelope calculation.
\begin{figure}
\begin{center}
\includegraphics[scale=1.0]{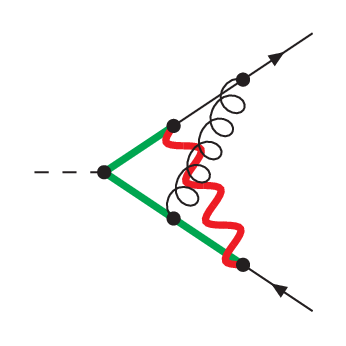}
\end{center}
\caption{
A non-planar Feynman diagram contributing to the mixed ${\mathcal O}(\alpha \alpha_s)$-corrections 
to the decay $H \rightarrow b \bar{b}$ through a $H t \bar{t}$-coupling.
The Higgs boson is denoted by a dashed line, a top quark by a green line, a bottom quark with a black line
and a gluon by a curly line.
Particles with mass $m_W$ are drawn with a wavy line.
}
\label{fig_Higgs_decay}
\end{figure}
Figure~\ref{fig_Higgs_decay} shows a non-planar Feynman diagram contributing to the mixed ${\mathcal O}(\alpha \alpha_s)$-corrections 
to the decay $H \rightarrow b \bar{b}$ through a $H t \bar{t}$-coupling.
The notation follows \cite{Chaubey:2019lum}.
With two independent external momenta and two independent loop momenta we have seven Baikov variables, which we may take as
\begin{align}
 z_1 & = -k_1^2 + m_t^2,
 &
 z_2 & = -\left(k_1-p_1-p_2\right)^2 + m_t^2,
 &
 z_3 & = -\left(k_1+k_2\right)^2,
 \nonumber \\
 z_4 & = -\left(k_1+k_2-p_1\right)^2,
 &
 z_5 & = -k_2^2 + m_W^2,
 &
 z_6 & = -\left(k_2+p_2\right)^2 + m_t^2,
 \nonumber \\
 z_7 & = -\left(k_1-p_1\right)^2 + m_t^2.
 & &
\end{align}
$z_7$ is an auxiliary propagator.
The top sector of the family of Feynman integrals $I_{\nu_1 \nu_2 \nu_3 \nu_4 \nu_5 \nu_6 0}$ has one master integral, which we may takes as
\bq
 I_{1 1 1 1 1 1 0}.
\eq
Suppose we are interested in the decomposition of $I_{1 1 1 1 1 1 (-1)}$ in terms of master integrals:
\bq
 I_{1 1 1 1 1 1 (-1)}
 & = &
 c \;
 I_{1 1 1 1 1 1 0}
 + ...,
\eq
where the dots stand for terms proportional to master integrals in lower sectors.
The coefficient $c$ is computed with the help of intersection numbers as follows:
For the top sector we may work on the maximal cut
$z_1=z_2=z_3=z_4=z_5=z_6=0$.
We set
\bq
 p_1 & = & B\left(0,0,0,0,0,0,z_7\right) \; = \; \frac{1}{16} \left(z_7-p^2\right)^2 \left( z_7 + m_W^2 - m_t^2 \right)^2
\eq
and
\bq
 u  \; = \; p_1^{-\frac{1}{2}-\eps},
 & &
 \omega \; = \; d\ln u.
\eq
As basis of $H^1_\omega$ we take
\bq
 \hat{e}_{1111110}^{(1)} & = & 1.
\eq
The dual basis is then
\bq
 \hat{d}_{1111110}^{(1)} & = & \frac{2 \left(1 + 4 \eps\right)\left(3+4\eps\right)}{\left(1+2\eps\right) \left(p^2+m_W^2-m_t^2\right)^2},
\eq
where $p=p_1+p_2$ denotes the momentum of the Higgs boson.
The integrand of $I_{1 1 1 1 1 1 (-1)}$ on the maximal cut is
\bq
 \hat{e}_{111111(-1)}^{(1)} & = & z_7.
\eq
The sought-after coefficient $c$ is then given by
\bq
 c 
 & = &
 \left\langle e_{111111(-1)}^{(1)} \left| d_{1111110}^{(1)} \right. \right\rangle
 \; = \; \frac{1}{2} \left( p^2 + m_t^2 - m_W^2 \right),
\eq
which agrees with the results from ref.~\cite{Chaubey:2019lum}.


\section{Conclusions}
\label{sect:conclusions}

In this article I presented an algorithm for the computation of intersection numbers of twisted cocycles,
which avoids in intermediate steps algebraic extensions like square roots.
This is an improvement above the current state-of-the-art.
The algorithm may prove useful in applications towards Feynman integral reductions and the computation of differential
equations for Feynman integrals.

\subsection*{Acknowledgements}

I would like to thank Pierpaolo Mastrolia and Sebastian Mizera for helpful discussions.
I also would like to thank all organisers of the workshop ``MathemAmplitudes 2019: Intersection Theory \& Feynman Integrals'', where this work 
was initiated.

\subsection*{Data Availability}

Data sharing is not applicable to this article as 
no new data were created or analysed in this study.


\begin{appendix}

\section{The Laurent expansions around singular points}
\label{sect:laurent}

In this appendix we review the algorithm of \cite{Mizera:2019gea,Frellesvig:2019uqt}.
The algorithm computes the intersection number
\bq
 \left\langle \varphi_L \right. \left| \varphi_R \right\rangle
 & = &
 \frac{1}{\left(2\pi i\right)^n}
 \int \iota_\omega\left(\varphi_L\right) \wedge \varphi_R,
 \;\;\;\;\;\;\;\;\;
 \left\langle \varphi_L \right| \in H^{({\bf n})}_\omega,
 \;\;\;\;\;\;
 \left| \varphi_R \right\rangle \in \left( H^{({\bf n})}_\omega \right)^\ast
\eq
as follows: 
For $n=0$ we have $\nu_{\bf 0}=1$ and
\bq
 \left\langle e^{({\bf 0})}_1 \right| \; = \; 1,
 \;\;\;\;\;\;\;\;\;
 \left| d^{({\bf 0})}_1 \right\rangle \; = \; 1,
 \;\;\;\;\;\;\;\;\;
 \left\langle e^{({\bf 0})}_1 \right. \left| d^{({\bf 0})}_1 \right\rangle\; = \; 1.
\eq
Hence, the twisted intersection number of the $0$-forms $\varphi_L=\hat{\varphi}_L$ and
$\varphi_R=\hat{\varphi}_R$ is given by
\bq
 \left\langle \varphi_L \right. \left| \varphi_R \right\rangle
 & = &
 \hat{\varphi}_L \hat{\varphi}_R.
\eq
For $n>0$ one expands
the twisted cohomology class $\langle \varphi_L | \in H^{({\bf n})}_\omega$ 
in the basis of $H^{({\bf n-1})}_\omega$:
\bq
 \left\langle \varphi_L \right|
 & = &
 \sum\limits_{j=1}^{\nu_{\bf n-1}}
 \left\langle \varphi^{({\bf n})}_{L,j} \right| \wedge \left\langle e^{({\bf n-1})}_j \right|.
\eq
By recursion we may assume that all intersection numbers involving the variables $z_1,\dots,z_{n-1}$
are already known, therefore it remains to compute the intersection in the variable $z_n$.
One has
\bq
 \left\langle \varphi_L \right. \left| \varphi_R \right\rangle
 & = &
 \sum\limits_{z_0 \in  {\mathcal S}_{\bf n}} 
 \sum\limits_{j=1}^{\nu_{\bf n-1}}
 \mathop{\mathrm{res}}_{z_n=z_0}
 \left(\hat{\psi}^{({\bf n})}_{L,j}
 \left\langle e^{({\bf n-1})}_j \left| \varphi_R \right. \right\rangle \right)
\eq
where $\hat{\psi}^{({\bf n})}_{L,j}$ is determined by
\bq
 \partial_{z_n} \hat{\psi}^{({\bf n})}_{L,j} + \hat{\psi}^{({\bf n})}_{L, i} \Omega^{({\bf n})}_{i j} & = & \hat{\varphi}^{({\bf n})}_{L,j},
\eq
and $\Omega^{({\bf n})}$ is given by eq.~(\ref{def_Omega_left}).
${\mathcal S}_{\bf n}$ is the set of singular points of $\Omega^{({\bf n})}$ in the variable $z_n$, 
including possibly $\infty$.
The function $\hat{\psi}^{({\bf n})}_{L,j}$ need only be computed locally as a Laurent expansion around 
each singular point.
It is at this stage, where algebraic roots enter: The singular points $z_0 \in  {\mathcal S}_{\bf n}$ are given
by the roots of the polynomials appearing in the denominators of the entries of the matrix $\Omega^{({\bf n})}$.

An alternative formulation of the algorithm of \cite{Mizera:2019gea,Frellesvig:2019uqt} exchanges the roles
of $\varphi_L$ and $\varphi_R$ and computes the intersection number by starting from
\bq
 \left\langle \varphi_L \right. \left| \varphi_R \right\rangle
 & = &
 \frac{1}{\left(2\pi i\right)^n}
 \int \varphi_L \wedge \iota_{-\omega}\left(\varphi_R\right),
 \;\;\;\;\;\;\;\;\;
 \left\langle \varphi_L \right| \in H^{({\bf n})}_\omega,
 \;\;\;\;\;\;
 \left| \varphi_R \right\rangle \in \left( H^{({\bf n})}_\omega \right)^\ast.
\eq
One expands $| \varphi_R \rangle \in ( H^{({\bf n})}_\omega )^\ast$ in a basis of $( H^{({\bf n-1})}_\omega )^\ast$:
\bq
 \left| \varphi_R \right\rangle
 & = &
 \sum\limits_{j=1}^{\nu_{\bf n-1}}
 \left| d^{({\bf n-1})}_j \right\rangle
 \wedge
 \left| \varphi^{({\bf n})}_{R,j} \right\rangle.
\eq
We now have
\bq
 \left\langle \varphi_L \right. \left| \varphi_R \right\rangle
 & = &
 -
 \sum\limits_{z_0 \in  {\mathcal S}_{\bf n}} 
 \sum\limits_{j=1}^{\nu_{\bf n-1}}
 \mathop{\mathrm{res}}_{z_n=z_0} \left(
 \left\langle \left. \varphi_L \right|  d^{({\bf n-1})}_j \right\rangle 
 \hat{\psi}^{({\bf n})}_{R,j}
 \right)
\eq
where $\hat{\psi}^{({\bf n})}_{R,j}$ is determined by
\bq
 \partial_{z_n} \hat{\psi}^{({\bf n})}_{R,j} - \Omega^{({\bf n})}_{j k} \hat{\psi}^{({\bf n})}_{R, k} 
 & = & 
 \hat{\varphi}^{({\bf n})}_{R,j},
\eq
and $\Omega^{({\bf n})}$ is given by eq.~(\ref{def_Omega_right}) (or equivalently by eq.~(\ref{def_Omega_left})).

\end{appendix}

{\footnotesize
\bibliography{/home/stefanw/notes/biblio}

\begin{thebibliography}{10}

\bibitem{aomoto1975}
K.~Aomoto,
\newblock J. Math. Soc. Japan {\bf 27}, 248 (1975).

\bibitem{Matsumoto:1994}
K.~Matsumoto,
\newblock Kyushu Journal of Mathematics {\bf 48}, 335 (1994).

\bibitem{cho1995}
K.~Cho and K.~Matsumoto,
\newblock Nagoya Math. J. {\bf 139}, 67 (1995).

\bibitem{matsumoto1998}
K.~Matsumoto,
\newblock Osaka J. Math. {\bf 35}, 873 (1998).

\bibitem{Ohara:2003}
K.~Ohara, Y.~Sugiki, and N.~Takayama,
\newblock Funkcialaj Ekvacioj {\bf 46}, 213 (2003).

\bibitem{Goto:2013}
Y.~{Goto},
\newblock International Journal of Mathematics {\bf 24}, 1350094 (2013),
  arXiv:1308.5535.

\bibitem{Goto:2015aaa}
Y.~Goto and K.~Matsumoto,
\newblock Nagoya Math. J. {\bf 217}, 61 (2015), arXiv:1310.4243.

\bibitem{Goto:2015aab}
Y.~Goto,
\newblock Osaka J. Math. {\bf 52}, 861 (2015), arXiv:1310.6088.

\bibitem{Goto:2015aac}
Y.~{Goto},
\newblock Kyushu Journal of Mathematics {\bf 69}, 203 (2015), arXiv:1406.7464.

\bibitem{Matsubara-Heo:2019}
S.-J. {Matsubara-Heo} and N.~{Takayama},
\newblock (2019), arXiv:1904.01253.

\bibitem{Aomoto:book}
K.~Aomoto and M.~Kita,
\newblock {\em Theory of Hypergeometric Functions} (Springer, 2011).

\bibitem{Yoshida:book}
M.~Yoshida,
\newblock {\em Hypergeometric Functions, My Love} (Vieweg, 1997).

\bibitem{Cachazo:2013gna}
F.~Cachazo, S.~He, and E.~Y. Yuan,
\newblock Phys.Rev. {\bf D90}, 065001 (2014), arXiv:1306.6575.

\bibitem{Cachazo:2013hca}
F.~Cachazo, S.~He, and E.~Y. Yuan,
\newblock Phys.Rev.Lett. {\bf 113}, 171601 (2014), arXiv:1307.2199.

\bibitem{Cachazo:2013iea}
F.~Cachazo, S.~He, and E.~Y. Yuan,
\newblock JHEP {\bf 1407}, 033 (2014), arXiv:1309.0885.

\bibitem{Mizera:2017rqa}
S.~Mizera,
\newblock Phys. Rev. Lett. {\bf 120}, 141602 (2018), arXiv:1711.00469.

\bibitem{Mizera:2017cqs}
S.~Mizera,
\newblock JHEP {\bf 08}, 097 (2017), arXiv:1706.08527.

\bibitem{Mizera:2019gea}
S.~Mizera,
\newblock {\em {Aspects of Scattering Amplitudes and Moduli Space
  Localization}},
\newblock PhD thesis, Perimeter Inst. Theor. Phys., 2019, arXiv:1906.02099.

\bibitem{Mizera:2019blq}
S.~Mizera,
\newblock (2019), arXiv:1912.03397.

\bibitem{Mastrolia:2018uzb}
P.~Mastrolia and S.~Mizera,
\newblock JHEP {\bf 02}, 139 (2019), arXiv:1810.03818.

\bibitem{Frellesvig:2019kgj}
H.~Frellesvig {\em et~al.},
\newblock JHEP {\bf 05}, 153 (2019), arXiv:1901.11510.

\bibitem{Frellesvig:2019uqt}
H.~Frellesvig {\em et~al.},
\newblock Phys. Rev. Lett. {\bf 123}, 201602 (2019), arXiv:1907.02000.

\bibitem{Mizera:2019vvs}
S.~Mizera and A.~Pokraka,
\newblock JHEP {\bf 02}, 159 (2020), arXiv:1910.11852.

\bibitem{Chen:2020uyk}
J.~Chen, X.~Jiang, X.~Xu, and L.~L. Yang,
\newblock Phys. Lett. B {\bf 814}, 136085 (2021), arXiv:2008.03045.

\bibitem{Frellesvig:2020qot}
H.~Frellesvig {\em et~al.},
\newblock JHEP {\bf 03}, 027 (2021), arXiv:2008.04823.

\bibitem{Caron-Huot:2021xqj}
S.~Caron-Huot and A.~Pokraka,
\newblock (2021), arXiv:2104.06898.

\bibitem{Tkachov:1981wb}
F.~V. Tkachov,
\newblock Phys. Lett. {\bf B100}, 65 (1981).

\bibitem{Chetyrkin:1981qh}
K.~G. Chetyrkin and F.~V. Tkachov,
\newblock Nucl. Phys. {\bf B192}, 159 (1981).

\bibitem{mimachi2004}
K.~Mimachi, K.~Ohara, and M.~Yoshida,
\newblock Tohoku Math. J. (2) {\bf 56}, 531 (2004).

\bibitem{Weinzierl:2014vwa}
S.~Weinzierl,
\newblock JHEP {\bf 1404}, 092 (2014), arXiv:1402.2516.

\bibitem{Sogaard:2015dba}
M.~Søgaard and Y.~Zhang,
\newblock Phys. Rev. {\bf D93}, 105009 (2016), arXiv:1509.08897.

\bibitem{Bosma:2016ttj}
J.~Bosma, M.~Søgaard, and Y.~Zhang,
\newblock Phys. Rev. {\bf D94}, 041701 (2016), arXiv:1605.08431.

\bibitem{Cattani:2005}
E.~Cattani and A.~Dickenstein,
\newblock in: Bronstein M. et al. (eds), Solving Polynomial Equations,
  Algorithms and Computation in Mathematics, vol 14. Springer , 1 (2005).

\bibitem{Zhang:2016kfo}
Y.~Zhang,
\newblock {Lecture Notes on Multi-loop Integral Reduction and Applied Algebraic
  Geometry},
\newblock 2016, arXiv:1612.02249.

\bibitem{Moser:1959}
J.~Moser,
\newblock Mathematische Zeitschrift {\bf 1}, 379 (1959).

\bibitem{Lee:2014ioa}
R.~N. Lee,
\newblock JHEP {\bf 04}, 108 (2015), arXiv:1411.0911.

\bibitem{Griffiths:book}
P.~Griffiths and J.~Harris,
\newblock {\em Principles of Algebraic Geometry} (John Wiley \& Sons, New York,
  1994).

\bibitem{Lee:2013hzt}
R.~N. Lee and A.~A. Pomeransky,
\newblock JHEP {\bf 11}, 165 (2013), arXiv:1308.6676.

\bibitem{Baikov:1996iu}
P.~A. Baikov,
\newblock Nucl. Instrum. Meth. {\bf A389}, 347 (1997), arXiv:hep-ph/9611449.

\bibitem{Lee:2009dh}
R.~N. Lee,
\newblock Nucl. Phys. {\bf B830}, 474 (2010), arXiv:0911.0252.

\bibitem{Frellesvig:2017aai}
H.~Frellesvig and C.~G. Papadopoulos,
\newblock JHEP {\bf 04}, 083 (2017), arXiv:1701.07356.

\bibitem{Bosma:2017ens}
J.~Bosma, M.~Sogaard, and Y.~Zhang,
\newblock JHEP {\bf 08}, 051 (2017), arXiv:1704.04255.

\bibitem{Harley:2017qut}
M.~Harley, F.~Moriello, and R.~M. Schabinger,
\newblock JHEP {\bf 06}, 049 (2017), arXiv:1705.03478.

\bibitem{Grozin:2011mt}
A.~G. Grozin,
\newblock Int. J. Mod. Phys. {\bf A26}, 2807 (2011), arXiv:1104.3993.

\bibitem{Bogner:2019lfa}
C.~Bogner, S.~Müller-Stach, and S.~Weinzierl,
\newblock Nucl. Phys. B {\bf 954}, 114991 (2020), arXiv:1907.01251.

\bibitem{Caffo:1998du}
M.~Caffo, H.~Czyz, S.~Laporta, and E.~Remiddi,
\newblock Nuovo Cim. {\bf A111}, 365 (1998), arXiv:hep-th/9805118.

\bibitem{Chaubey:2019lum}
E.~Chaubey and S.~Weinzierl,
\newblock JHEP {\bf 05}, 185 (2019), arXiv:1904.00382.

\end{thebibliography}
\bibliographystyle{/home/stefanw/latex-style/h-physrev5}
}

\end{document}